\documentclass[aps,pra,reprint,showpacs]{revtex4-1}


\usepackage{graphicx}
\usepackage{braket}
\usepackage{bm}
\usepackage{amsmath}
\usepackage{amsthm}
\usepackage{amssymb}
\usepackage{multirow}

\theoremstyle{definition}
\newtheorem{theorem}{Theorem}
\newtheorem*{remark}{Remark}
\DeclareMathOperator{\arccot}{arccot}

\begin{document}


\title{Measurement-assisted Landau-Zener transitions} 



\author{Alexander Pechen}
\email{pechen@mi.ras.ru}
\homepage{http://www.mathnet.ru/eng/person17991}
\affiliation{Steklov Mathematical Institute of Russian
Academy of Sciences, Gubkina 8, Moscow 119991, Russia}

\author{Anton Trushechkin}
\email{trushechkin@mi.ras.ru}
\homepage{http://www.mathnet.ru/eng/person31114}
\affiliation{Steklov Mathematical Institute of Russian
Academy of Sciences, Gubkina 8, Moscow 119991, Russia}
\affiliation{National Research Nuclear University ``MEPhI'', Kashirskoe shosse 31, Moscow 115409, Russia}


\date{\today}

\begin{abstract}
Nonselective quantum measurements, i.e., measurements without reading the results, are often considered as a resource for manipulating quantum systems. In this work, we investigate optimal acceleration of the Landau-Zener (LZ) transitions by non-selective quantum measurements. We use the measurements of a population of a diabatic state of the LZ system at certain time instants as control and find the optimal time instants which maximize the LZ transition. We find surprising nonmonotonic behavior of the maximal transition probability with increase of the coupling parameter when the number of measurements is large. This transition probability gives an optimal approximation to the fundamental quantum Zeno effect (which corresponds to continuous measurements) by a fixed number of discrete measurements. The difficulty for the analysis is that the transition probability as a function of time instants has a huge number of local maxima. We resolve this problem both analytically by asymptotic analysis and numerically by the development of efficient algorithms mainly based on the dynamic programming. The proposed numerical methods can be applied, besides this problem, to a wide class of measurement-based optimal control problems.
\end{abstract}

\pacs{03.65.-w, 03.65.Xp, 03.67.-a, 02.30.Yy, 32.80.Qk}

\maketitle 

\section{Introduction}

The control of atomic and molecular systems with quantum dynamics is an important branch of modern science with multiple existing and prospective applications in physics, chemistry, and quantum technologies~\cite{WisemanBook2014,Lapert2010,Brif2012,Caruso2012,Leung2012,
Brif2010,Holevo2013,TannorBook2007,AlessandroBook2007,LetokhovBook2007,BrumerBook2003,FradkovBook2003,RiceBook2000,ButkovskiyBook1990}. Significant efforts are directed towards the development of efficient methods for finding optimal controls for quantum systems~\cite{Machnes2011,Kuprov2011,Baturina,Sanders2014,qsl,Hentschel2011}. In this work we approach this problem for optimal acceleration of transitions in the two-level Landau-Zener system.

The Landau-Zener (LZ) system~\cite{Landau,Zener,Stuck,Majorana} is a simple model of nonadiabatic transition caused by avoided energy level crossing. This system has a very wide range of applications in physics, chemistry, and biochemistry; see review in~\cite{Pokrovsky}. Some of the applications include transfer of charge~\cite{Kuztetsov}, photosynthesis~\cite{Warshel}, chemical reactions~\cite{LZchem,Nitzan}, and manipulations with qubits. The last application includes the utilization of the Landau-Zener-St\"uckelberg interferometry~\cite{MachZenderLZ,SpinBeamspl,UltrafastLZS}, which is based on multiple traverses of the avoided crossing region, functional variation of time-dependent parameters of the model with a single traverse of the avoided crossing region~\cite{LZ-gate-robust,qsl,SupercondLZ,Qdriving,Baturina,TrapfreeLZ}, and coupling the LZ system to an external environment~\cite{LZ-spinchain,TempLZ,PhotonassistedLZ,LZ-mediated-envir}.

In this paper, we use another quantum control paradigm---measurement-based quantum control, which was proposed and developed in~\cite{PechenIlin,PechenChemPh,Shuang}. Experimental demonstrations of measurement-only state manipulation on a nuclear spin qubit in diamond by adaptive partial measurements is described in~\cite{Blok2014}. The use of measurements as control is now actively studied also in combination with feedback~\cite{Wiseman2011,Fu2014}. Machine learning was used to generate autonomous adaptive feedback schemes for quantum information~\cite{MLearn}. The measurement-based quantum control scheme~\cite{PechenIlin} utilizes non-selective quantum measurements, i.e., measurements without reading the results, to manipulate quantum systems. Such measurements do not increase knowledge about the quantum state of the system, but, in contrast to classical mechanics, they influence the system and can be regarded as ``kicks'' destroying quantum coherence. We consider the problem of maximization of the LZ transition probability from one adiabatic energy level to another using a fixed number of non-selective measurements of adiabatic level at certain time instants. The goal is to find optimal time instants to maximize the transition.

If the number of measurements is infinite and measurements are performed infinitely frequently, then the transition probability tends to one due to the fundamental quantum Zeno effect~\cite{Halfin,Halfin-2,SudarMisr}. However, an infinite number of measurement is never experimentally achievable. So, one may ask a natural question of how to find optimal instants of a given finite and fixed number of measurements to approximate the quantum Zeno effect. Approximaton to the quantum Zeno effect by finite number of measurements was considered in~\cite{Cook,Itano,Zeil-1,Zeil-2,Campisi2011} without formulating the problem of optimality. The problem of optimal approximation was stated and solved for two-level systems in~\cite{PechenIlin,Shuang} in the case when measured observables can be arbitrarily chosen. In the present work, for a particular model (the LZ model), we consider and solve analytically and numerically the more restrictive case when observables are fixed while the measurement's time instants are optimized. The advantage of this scheme is in its relative simplicity for experimental realization, since instants of measurements of a fixed observable (population) are simpler to modify than measured observables.

We find a surprising nonmonotonic dependence of the maximal transition probability on the coupling strength when the number of measurements is large. Recently a nonmonotonic dependence on the coupling strength and temperature was found for a different physical situation of a LZ system coupled to a harmonic-oscillator mode \cite{Ashhab2014}.

The following text is organized as follows. Section~\ref{SecLZmodel} is devoted to a discussion of the model and the LZ formula for the transition probability. In Section~\ref{SecLZControl} we discuss a measurement-based control scheme and formulate the optimal control problem. We show that the target function, which is the transition probability as a function of time instants, may have a huge number of local maxima. This feature makes it hard to solve the problem by well-known general global optimization methods such as random search, simulated annealing, evolutionary algorithms, etc.

In Section~\ref{SecLZdynprog}, we develop an efficient numerical algorithm based on dynamic programming to solve this problem. The algorithm is still computationally costly and in order to unravel mathematical structures standing behind this problem and to develop simpler algorithms, we consider limiting cases of antiadiabatic (Section~\ref{SecSmall}) and adiabatic (Section~\ref{SecOptLarge}) regimes. A graphical presentation of the results is given in Figs.~\ref{FigResults} and \ref{FigGraphn}.

\section{Landau--Zener model}\label{SecLZmodel}

The LZ Hamiltonian at time $t$ is defined as
\begin{equation}\label{EqLZHam}
H(t)=\Delta\sigma_x-\varepsilon t\sigma_z,
\end{equation}
where
\[\sigma_x=\begin{pmatrix}0&1\\1&0\end{pmatrix},\quad \sigma_z=\begin{pmatrix}1&0\\0&-1\end{pmatrix}\] are Pauli matrices, and $\varepsilon$ and $\Delta$ are positive constants. The time evolution of a pure state $|\psi(t)\rangle$ of the system is defined by the Schr\"odinger equation (in the units $\hbar=1$)
\[
 \frac{d|\psi(t)\rangle}{dt}=-iH(t)|\psi(t)\rangle
\]

Time-dependent eigenvalues (adiabatic energy levels) and eigenvectors of the Hamiltonian have the form
\begin{eqnarray*}
E_j(t)&=&\mp\sqrt{\Delta^2+(\varepsilon t)^2},\\
\ket{\varphi_j(t)}&=&a_j(t)\begin{pmatrix}
\varepsilon t\pm\sqrt{\Delta^2+(\varepsilon t)^2}\\-\Delta
\end{pmatrix}
\end{eqnarray*}
where $j=0,1$, the top sign in $\pm$ and $\mp$ corresponds to $j=0$, and the bottom sign corresponds to $j=1$, $a_j(t)\in\mathbb C$. If $a_j(t)$ are such that $\ket{\varphi_j(t)}$ are unit vectors for all $t$, then $\ket{\varphi_0(t)}$ and $\ket{\varphi_1(-t)}$ tend to $\ket0$ (up to a phase term) as $t\to+\infty$ and to $\ket1$ as $t\to-\infty$. Here,
$|0\rangle=\begin{pmatrix}1&0\end{pmatrix}^T$, $|1\rangle=\begin{pmatrix}0&1\end{pmatrix}^T$
is the standard basis of $\mathbb C^2$.

The dependence of energy levels on time is shown on Fig.~\ref{FigLevels}. A transition between the energy levels is possible mostly in the time interval when they are close to each other (so called avoided crossing region).

\begin{figure}
\includegraphics[width=\linewidth]{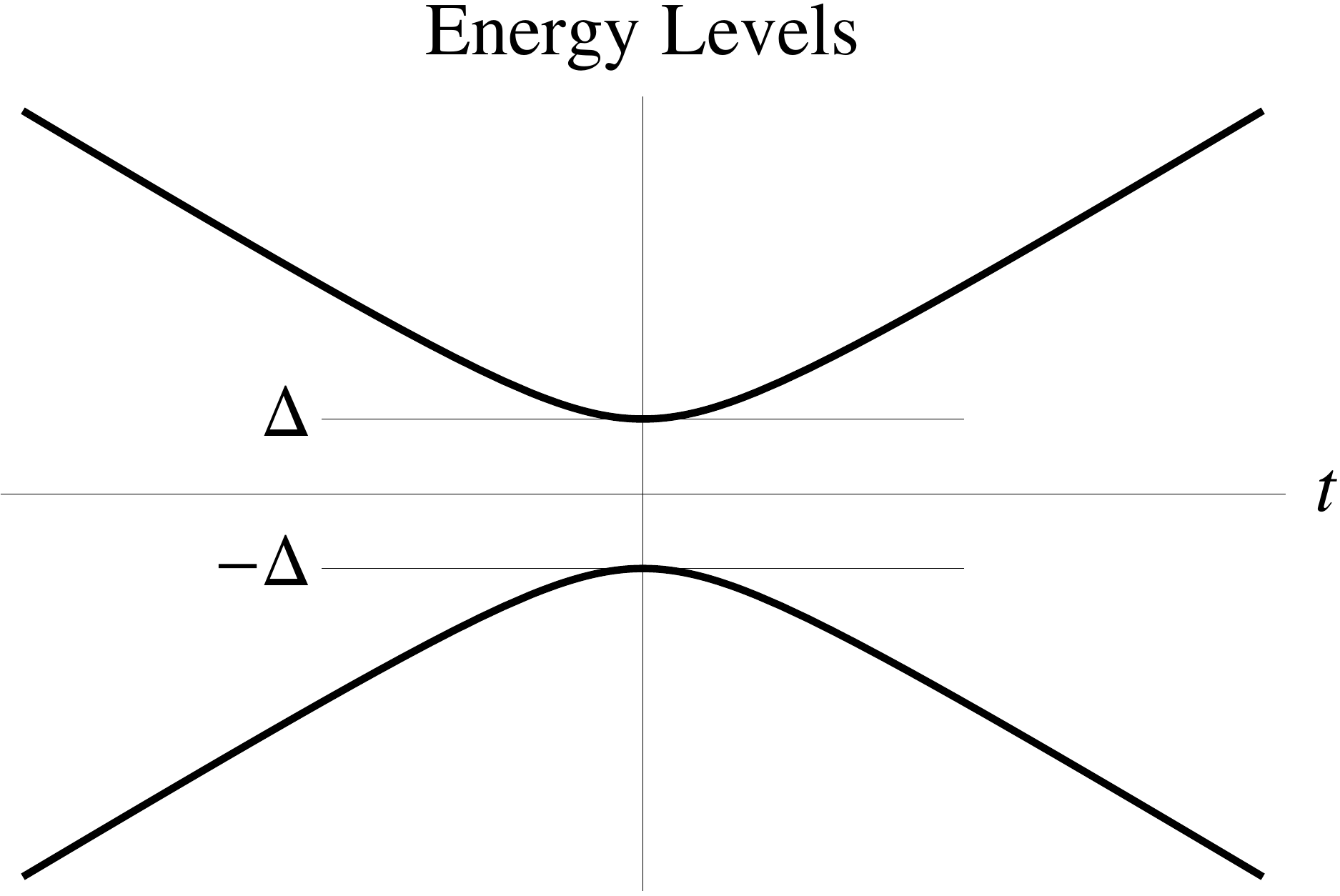}%
\caption{Energy levels of the LZ system\label{FigLevels}}%
\end{figure}

A pure quantum state is specified by a unit vector
\[
|\psi(t)\rangle=a(t)|0\rangle+b(t)|1\rangle,
\]
where $a(t)$ and $b(t)$ are complex coefficients, and $|a(t)|^2+|b(t)|^2=1$. Expressing these coefficients in the polar form as $a(t)=\cos(\theta/2) e^{i\varphi_1}$, $b(t)=\sin(\theta/2)e^{i\varphi_2}$, $0\leq\theta\leq\pi$, $0\leq\varphi_{1,2}<2\pi$, allows one to represent the state of the system as a point on the unit two-dimensional sphere, called the Bloch sphere, with spherical coordinates $\theta$ and $\varphi=\varphi_2-\varphi_1$.

The Schr\"odinger equation takes the form
\begin{equation*}
\left\lbrace
\begin{aligned}
i\dot a&=-\varepsilon ta+\Delta b\\
i\dot b&=\Delta a+\varepsilon tb.
\end{aligned}
\right.
\end{equation*}

By the time rescaling $\tau=\sqrt{2\varepsilon} t$ (and redenoting the new time variable $\tau$ again by $t$) this system is reduced to
\begin{equation}\label{EqLZ}
\left\lbrace
\begin{aligned}
i\dot a&=-\frac t2a+\sqrt\gamma b\\
i\dot b&=\sqrt\gamma a+\frac t2b,
\end{aligned}
\right.
\end{equation}
where $\gamma=\Delta^2/2\varepsilon$. This system of equations defines a family of unitary operators $U(t,t_0)$, $t_0,t\in\mathbb R$. Let us denote their matrix elements in the standard basis as
\[
\begin{pmatrix}
u_{00}(t,t_0) & u_{01}(t,t_0)\\
u_{10}(t,t_0) & u_{11}(t,t_0)
\end{pmatrix},
\]
where
\[u_{kl}(t,t_0)=\langle k|U(t,t_0)|l\rangle.\]
Then $u_{00}(t,t_0)=a(t)$ and $u_{10}(t,t_0)=b(t)$, where $a(t)$ and $b(t)$ are the solutions of system (\ref{EqLZ}) with the initial conditions $a(t_0)=1$, $b(t_0)=0$. Analogously, $u_{01}(t,t_0)=a(t)$ and $u_{11}(t,t_0)=b(t)$ if the initial conditions are $a(t_0)=0$, $b(t_0)=1$.

The general solution of system (\ref{EqLZ}) can be expressed as (detailed derivation can be found, e.g., in~\cite{BraatasRashba})
\begin{subequations}\label{EqLZGenSol}
\begin{eqnarray}
a(t)&=&c_1e^{-3i\pi/8}\sqrt\gamma D_{-i\gamma-1}(-e^{i\pi/4}t)\nonumber\\&&+
c_2e^{-3i\pi/8}D_{i\gamma}(e^{3i\pi/4}t)\\
b(t)&=&c_1e^{3i\pi/8} D^*_{i\gamma}(e^{3i\pi/4}t)\nonumber\\&&-
c_2e^{3i\pi/8}\sqrt\gamma  D^*_{-i\gamma-1}(-e^{i\pi/4}t).
\end{eqnarray}
\end{subequations}
Here $D_\nu(z)$ is the parabolic cylinder function defined for arbitrary complex order $\nu$ and argument $z$, $D^*_\nu(z)=D_{\bar \nu}(\bar z)$ is a complex conjugate function, and $c_1$ and $c_2$ are arbitrary constants. The parabolic cylinder functions are related to the Whittaker functions and the confluent hypergeometric functions; see~\cite{GR}.

We will use the following asymptotics of the functions $D_{i\gamma}(\pm e^{3i\pi/4}t)$ and $D_{-i\gamma-1}(\pm e^{i\pi/4}t)$ as $t\to+\infty$:
\begin{subequations}\label{EqDasymp}
\begin{eqnarray}
D_{i\gamma}(e^{3i\pi/4}t)&=& e^{-3\pi\gamma/4}e^{it^2/4}t^{i\gamma}+O(t^{-1}),\\
D_{i\gamma}(-e^{3i\pi/4}t)&=& e^{\pi\gamma/4}e^{it^2/4}t^{i\gamma}+O(t^{-2}),\\
D_{-i\gamma-1}(e^{i\pi/4}t)&=&O(t^{-1}),\\
D_{-i\gamma-1}(-e^{i\pi/4}t)&=&\frac{\sqrt{2\pi}e^{-\pi\gamma/4}}{\Gamma(1+i\gamma)}e^{it^2/4}t^{i\gamma}+O(t^{-1})\qquad
\end{eqnarray}
\end{subequations}
Let us note that both functions are bounded as $t\in\mathbb R$.

Consider the initial conditions such that
\begin{equation}\label{EqInitCond}
|a(-\infty)|=1,\quad b(-\infty)=0.
\end{equation}
Asymptotics~(\ref{EqDasymp}) imply that $c_1$ can be choosen as $e^{-\pi\gamma/4}e^{3i\pi/8}$ (this corresponds to the phase of $a(t)$ equal to $e^{it^2/4}|t|^{i\gamma}$ for large negative $t$) and $c_2=0$, so that
\begin{subequations}\label{EqaD}
\begin{eqnarray}
a(t)&=&e^{-\pi\gamma/4}D_{i\gamma}(e^{3i\pi/4}t),\\
b(t)&=&-e^{-\pi\gamma/4}e^{3i\pi/4}\sqrt\gamma  D^*_{-i\gamma-1}(-e^{i\pi/4}t).\label{EqbD}
\end{eqnarray}
\end{subequations}
This gives the celebrated Landau--Zener formula:
\begin{equation}\label{EqLZformula}
|a(+\infty)|^2=e^{-2\pi\gamma}
\end{equation}
The quantity $|a(+\infty)|^2$ is the probability that the final diabatic state of the system will be the same as the initial diabatic state. This probability will be of our interest in the following. Since it corresponds to a jump from one adiabatic energy level to another (the state $\ket0$ corresponds to large negative energy for $t\to-\infty$ and to large positive energy for $t\to-\infty$), we will refer to it as ``transition probability'' (sometimes the term ``LZ transition'' is used in the opposite sense: for a transition between diabatic states $\ket0$ and $\ket1$ and, so, staying on the same adiabatic energy level).

As we see, the transition probability is close to zero for large $\gamma$, or, equivalently, for small $\varepsilon$ with respect to $\Delta^2$. That is, for very slow (adiabatic) evolution of the Hamiltonian, the system stays on its initial adiabatic energy level. This limiting case is called adiabatic. In the opposite limit of small $\gamma$, called the anti-adiabatic limit, the LZ transition probability is close to one.

\section{Measurement-based quantum control}\label{SecLZControl}

A general state of a quantum system is specified by a density operator  $\rho$ which is a self-adjoint positive unit-trace operator. Its unitary evolution is given by
$$\rho(t)=U(t,t_0)\rho(t_0)U^\dag(t,t_0),$$
where unitary operators $U(t,t_0)$ are defined above.

A von Neumann observable for a two-level system is specified by two linear operators $P_0$ and $P_1$ such that $P_jP_k=P_j\delta_{jk}$, where $j,k=0,1$ and $\delta_{jk}$ is the Kronecker symbol, and $P_0+P_1=I$, where $I$ is the identity operator (operator $P$ satisfying the property $P^2=P$ is called projector).

We will use only non-selective measurements, i.e., measurements without reading the results. Such measurements do not increase our knowledge of the system's state, but they cause an instantaneous change of the state to
\begin{equation}\label{EqCollapseGen}
\rho'= P_0\rho P_0+P_1\rho P_1,
\end{equation}
where $\rho$ is the density matrix just before the measurement. In particular, if we perform a measurement in the standard basis, i.e., $P_0=\ket0\bra0$, $P_1=\ket1\bra1$, then
\begin{equation}\label{EqCollapse}
\rho'= |0\rangle\langle0|\rho|0\rangle\langle0|+|1\rangle\langle1|\rho|1\rangle\langle1|.
\end{equation}
Alternatively, one can represent the change of the quantum state as follows. Every two-dimensional density matrix in the standard basis can be represented as
\[
\rho=\frac12\begin{pmatrix}
1+w_z & w_x-iw_y\\
w_x+iw_y & 1-w_z
\end{pmatrix},
\]
where $\bm w=(w_x,w_y,w_z)\in\mathbb R^3$ is called the Bloch vector, $|\bm w|\leq1$. Measurement in the standard basis changes the state to
\[
\rho'=\frac12\begin{pmatrix}
1+w_z & 0\\
0 & 1-w_z
\end{pmatrix}.
\]
So, it eliminates the off-diagonal elements of the density matrix, or, equivalently, projects the Bloch vector onto the vertical axis. Unitary evolution does not change the length of the Bloch vector.

In the density matrix formalism,  initial conditions (\ref{EqInitCond}) have the form $\rho(-\infty)=|0\rangle\langle0|$. The LZ transition probability, i.e., the left-hand side of (\ref{EqLZformula}), is now expressed as $\langle0|\rho(+\infty)|0\rangle$.

We consider the following optimization problem: given natural $N$, find instants of measurements $t_1\leq t_2\leq\ldots\leq t_N$ such that the transition probability $\langle0|\rho(+\infty)|0\rangle$ is maximal for the initial condition $\rho(-\infty)=|0\rangle\langle0|$. The system evolves according to the unitary evolution between measurements and with jumps according to formula~(\ref{EqCollapse}) at the instants of measurements.

The transition probability can be expressed as
\begin{equation}\label{EqTarget}
\langle 0|\rho(+\infty)|0\rangle =\sum_{j_1,\ldots,j_N\in\{0,1\}}\prod_{k=0}^{N}|u_{j_{k+1},j_k}(t_k,t_{k+1})|^2,
\end{equation}
where $t_0=-\infty$, $t_{N+1}=+\infty$, and $j_0=j_{N+1}=0$.

We could use (\ref{EqLZGenSol}) to express the matrix elements $u_{jk}(t,t_0)$ by the parabolic cylinder functions and to solve the maximization problem numerically. However,  the matrix elements are high-oscillating functions with respect to both $t_0$ and $t$. Moreover, the frequency of oscillations tends to infinity as $\gamma\to\infty$. This causes a huge number of local maxima in the corresponding control landscape. As examples, target functions for some cases are presented on Figs.~\ref{FigTarget1dim} and \ref{FigTarget2dim}. Our attempts to apply the known methods of global optimization such as random search, simulated annealing, evolutionary algorithms, etc., show that the problem is hard to solve by these methods.

\begin{figure*}
\includegraphics[width=\linewidth]{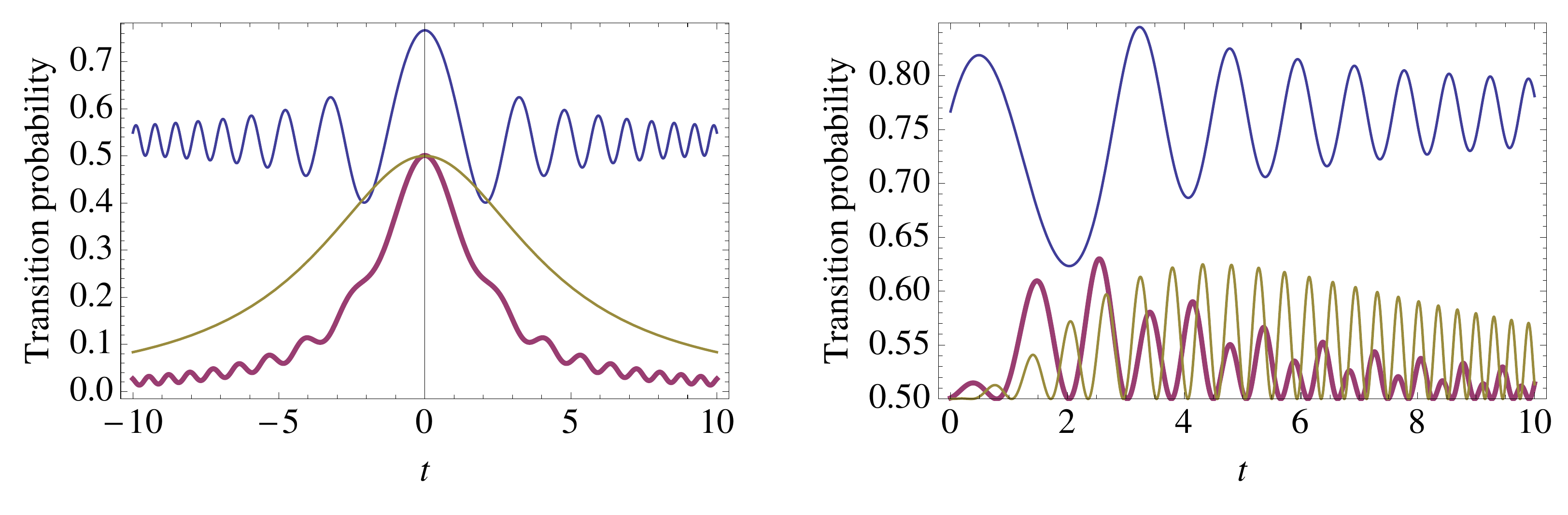}

\caption{Graphs of the transition probability for one measurement at the instant $t$ (left column) and for three measurements (right column) at the instants $-t$, 0, and $t$, $t\geq0$, for $\gamma=0.1$ (top curves), $\gamma=1$ (bottom thick curves), and $\gamma=5$ (bottom thin curves).\label{FigTarget1dim}}

\end{figure*}

\begin{figure*}
\includegraphics[width=\linewidth]{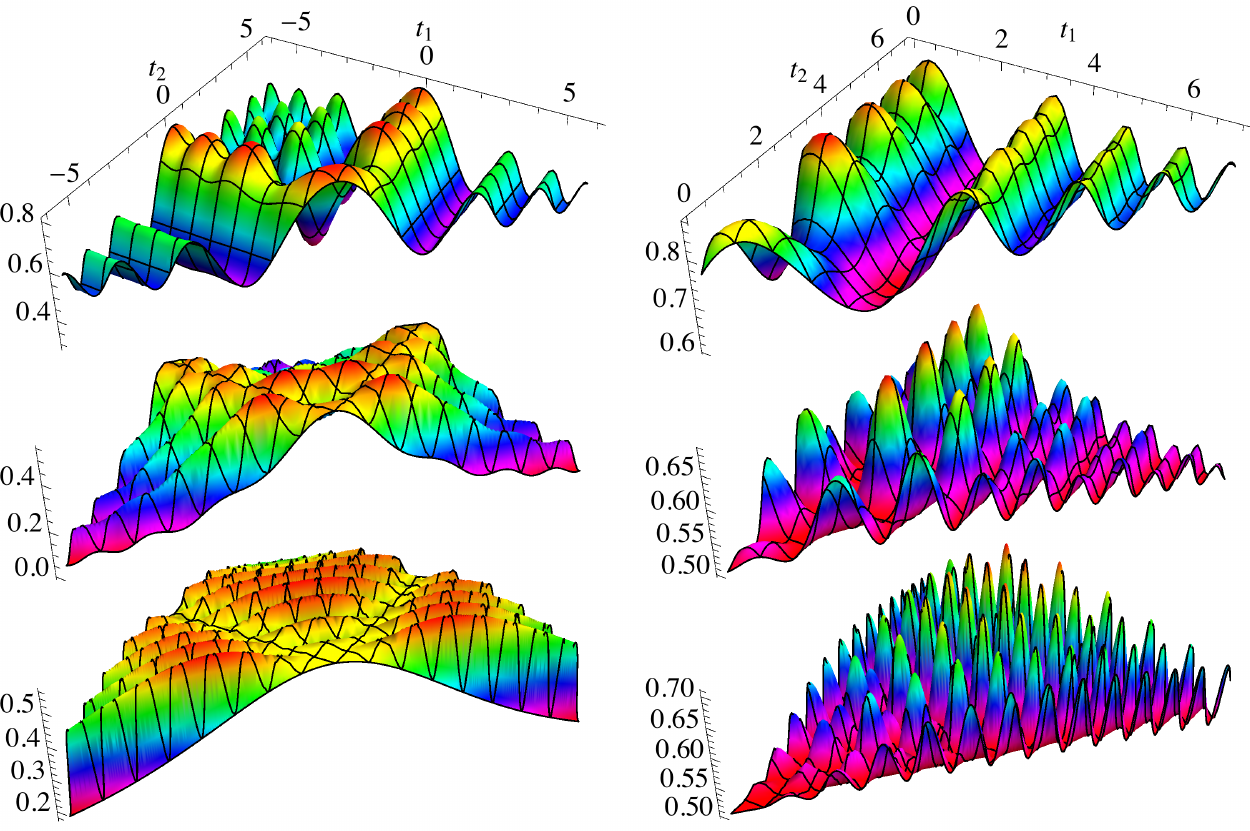}

\caption{Graphs of the transition probability for two measurements at the instants $t_1,t_2$ (left column) and for five measurements (right column) at the instants $-t_2,-t_1$, 0, $t_1$, $t_2$, $0\leq t_1\leq t_2$, for $\gamma=0.1$ (top plots), $\gamma=1$ (middle plots), and $\gamma=5$ (bottom plots). The ranges of $t_1$ and $t_2$ for the middle and bottom graphs are the same as for the top graphs. \label{FigTarget2dim}}

\end{figure*}

First let us establish an upper bound for the maximal transition probability.

\begin{theorem}\label{ThPechen}
For all values of the arguments we have
\begin{equation}\label{EqUpBnd}
\braket{0|\rho(+\infty)|0}\leq \frac12\left\lbrace1+\left(\cos\frac{\Delta\varphi}{N+1}\right)^{N+1}\right\rbrace.
\end{equation}
Here $\Delta\varphi$ denotes the angle between the Bloch vectors that correspond to the states $\ket0$ and $U^\dag(+\infty,-\infty)\ket0$.
\end{theorem}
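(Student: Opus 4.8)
The plan is to pass to the Bloch-sphere picture, where the analysis becomes geometric. There the state $|0\rangle$ is the north pole $\bm n=(0,0,1)$, each unitary $U(t_k,t_{k-1})$ acts as a length-preserving rotation $R_k$ of the Bloch vector, and a non-selective measurement in the standard basis projects the Bloch vector onto the $z$-axis (it annihilates $w_x,w_y$ and leaves $w_z$ fixed). Since the target equals $\braket{0|\rho(+\infty)|0}=\tfrac12(1+w_z)$, it suffices to bound the final $z$-coordinate $w_z$. First I would derive a recursion for $w_z$: immediately after each measurement the Bloch vector points along the axis, $\bm w=\ell\,\bm n$, and applying the next rotation $R$ and reading off the $z$-component gives $\ell(R\bm n)_z=\ell\cos\psi$, where $\psi=\angle(\bm n,R\bm n)$ is the angle through which $R$ tilts the pole. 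Iterating over the $N$ projections and the final free evolution yields the clean product formula $w_z=\prod_{k=1}^{N+1}\cos\psi_k$, with $\psi_k=\angle(\bm n,R_k\bm n)$ and $R_k$ the rotation attached to $U(t_k,t_{k-1})$.

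Next I would translate the constraint. Setting $q_0=\bm n$ and $q_j=R_{N+1}\cdots R_{N+2-j}\,\bm n$ produces a spherical path whose consecutive geodesic lengths are exactly the $\psi_k$ (in reversed order) and whose endpoints satisfy $d(q_0,q_{N+1})=\angle(\bm n,R\bm n)$, where $R=R_{N+1}\cdots R_1$ is the rotation of the total evolution $U(+\infty,-\infty)$ (the projections drop out and the unitary factors telescope). Because rotations preserve angles, this endpoint distance equals the angle $\Delta\varphi$ between $\bm n$ and $R^{-1}\bm n$, i.e.\ the Bloch vector of $U^\dag(+\infty,-\infty)|0\rangle$. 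Thus the theorem reduces to the purely geometric inequality
\[
\prod_{j=1}^{m}\cos d(q_{j-1},q_j)\le\Bigl(\cos\tfrac{d(q_0,q_m)}{m}\Bigr)^{m},\qquad m=N+1,
\]
valid for any points $q_0,\dots,q_m$ on the unit sphere.

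The crux is this spherical optimization, and the hard part is that the tempting shortcut fails. Bounding $\Delta\varphi\le\sum_k\psi_k$ by the triangle inequality and then applying Jensen to the concave function $\log\cos$ is far too weak: that relaxation is saturated by ``there-and-back'' paths with product $1$ and ignores the curvature of the sphere. The correct tool is a two-step lemma: for any $a,b,c$ on the sphere, $\cos d(a,b)\cos d(b,c)\le\cos^2\!\bigl(\tfrac12 d(a,c)\bigr)$, with equality iff $b$ is the geodesic midpoint of $a,c$. I would prove this by writing the left side as $\tfrac12[\cos(d(a,b)-d(b,c))+\cos(d(a,b)+d(b,c))]$ and using $\cos(d(a,b)-d(b,c))\le1$ together with the spherical perimeter bound $d(a,b)+d(b,c)+d(a,c)\le2\pi$ and the triangle inequality, which confine $d(a,b)+d(b,c)$ to $[d(a,c),\,2\pi-d(a,c)]$ and hence force $\cos(d(a,b)+d(b,c))\le\cos d(a,c)$.

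Finally I would run the optimization. Since $\prod_j\cos d(q_{j-1},q_j)$ is continuous on the compact configuration space of the interior points, it attains a maximum; this maximum is positive because the equally spaced geodesic configuration already gives $(\cos\tfrac{\Delta\varphi}{m})^{m}>0$ (note $\Delta\varphi/m\le\pi/m\le\pi/2$ for $N\ge1$). At a positive maximum the two-step lemma, applied to each consecutive triple, forces every interior $q_j$ to be the midpoint of its neighbours, so all $q_j$ lie equally spaced on the minimizing geodesic through $q_0,q_m$; this yields $d(q_{j-1},q_j)=\Delta\varphi/m$ and the value $(\cos\tfrac{\Delta\varphi}{m})^{m}$, while degenerate configurations (coincident or antipodal neighbours) only lower the product. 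Combining this with $w_z=\prod_k\cos\psi_k$ and $\braket{0|\rho(+\infty)|0}=\tfrac12(1+w_z)$ gives the stated bound. The essential obstacle is thus replacing the too-weak triangle-inequality relaxation by the curvature-aware two-step lemma and then carrying out the discrete optimization rigorously.
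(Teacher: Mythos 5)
Your route is genuinely different from the paper's: the paper proves the theorem in a few lines by relaxation, observing that the problem where the \emph{observables} may also be chosen freely was already solved in \cite{PechenIlin,Shuang} with optimum equal to the right-hand side of (\ref{EqUpBnd}), so the present, more constrained problem cannot do better. You instead set out to re-derive that underlying geometric result from scratch, which is more self-contained. Your reduction is sound: the product formula $w_z=\prod_{k=1}^{N+1}\cos\psi_k$, the telescoping path $q_j$ with $d(q_0,q_{N+1})=\Delta\varphi$, and the two-step lemma (whose proof via the triangle inequality and the perimeter bound $d(a,b)+d(b,c)+d(a,c)\le 2\pi$ is correct) are all fine, and you are right that the naive triangle-inequality-plus-Jensen relaxation is too weak.

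The gap is in the final optimization step. The claim that ``at a positive maximum the two-step lemma forces every interior $q_j$ to be the midpoint of its neighbours'' is false as stated, for two reasons. First, the equality case of your lemma is not only the geodesic midpoint: $b$ with $d(a,b)=d(b,c)=\pi-d(a,c)/2$ (the antipode of the midpoint) also gives equality, since both cosines flip sign. Second, and more seriously, replacing $q_j$ by the midpoint of its neighbours increases the \emph{total} product only if the product of the remaining factors is positive; when the two factors adjacent to $q_j$ have negative product and the remaining factors also multiply to a negative number, the midpoint replacement strictly decreases the objective, so no midpoint condition is forced at the maximum. This situation actually occurs at global maximizers: for $m=3$, take the equally spaced geodesic configuration and replace $q_1$ by its antipode. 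Both factors adjacent to $q_1$ flip sign, so the product is unchanged and this is again a global maximizer, yet $q_2$ is a midpoint (geodesic or antipodal) of neither $q_1$ nor $q_3$ since $d(q_1,q_2)=\pi-\Delta\varphi/3\neq\Delta\varphi/3=d(q_2,q_3)$; this also refutes the parenthetical that antipodal neighbours ``only lower the product.'' The argument can be repaired: flipping any interior point to its antipode changes the signs of exactly its two adjacent factors and preserves the product, and since a positive product has an even number of negative factors while adjacent-pair flips generate all even sign changes, every positive maximizer can be normalized so that all factors are strictly positive. Then all steps are less than $\pi/2$, the antipodal equality case is excluded, your midpoint argument goes through, and it remains to compare the wrap-around equal-step geodesic solutions $ms\equiv\pm\Delta\varphi\ (\mathrm{mod}\ 2\pi)$, of which $s=\Delta\varphi/m$ gives the largest value. (Also note the edge case $N=1$, $\Delta\varphi=\pi$: there the maximum is $0$, not positive, and the bound follows directly from the lemma.) Without this normalization step, the proof as written does not go through.
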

\begin{proof}
Suppose we can arbitrarily choose not only  time instants of measurements but also observables. Then the state of the system after the measurement is changed according to~(\ref{EqCollapseGen}), where an observable $\{P_0,P_1\}$ is chosen by us and may be different for different measurements. Consider maximization of $\braket{0|\rho(+\infty)|0}$ under these conditions. This problem was solved in~\cite{PechenIlin,Shuang}; the maximal transition probability is given by the right-hand side of~(\ref{EqUpBnd}). Our problem has an additional restriction that the observables are fixed. The optimal value of the target function in the problem with an additional restriction cannot exceed the optimal value of the target function for the problem without this restriction.
\end{proof}

\begin{remark}
Formally, in the problem considered in~\cite{PechenIlin,Shuang}, the instants of measurements $t_1,\ldots,t_N$ are fixed. But, in the case of variable observables, this is not a restriction, because the measurement of any observable $\{P_1,P_2\}$ at an instant $t$ is equivalent to the measurement of the observable $\{P'_1,P'_2\}$ at the instant $t'$, where
$$P'_j=U(t',t)P_jU^\dag(t',t),\quad j=1,2.$$
\end{remark}

Further relations between these two measurement-based optimal control problems are considered in Sec.~\ref{SecPechen}.

Fig.~\ref{FigTarget1dim} suggests that, for all $\gamma$, the optimal instant of the measurement in the case of $N=1$ is $t=0$. We prove this rigorously for the limiting cases of small and large $\gamma$. The corresponding value of the transition probability equals to
\begin{equation}\label{EqTarget1}
\braket{0|\rho(+\infty)|0}=\frac12(1+e^{-2\pi\gamma}).
\end{equation}
This formula can be derived from
\begin{equation}\label{EqD0}
D_\nu(0)=\frac{2^{\nu/2}\sqrt\pi}{\Gamma(\frac{1-\nu}2)},
\end{equation}
where $\Gamma(x)$ is the gamma function, which immediately follows from the definition of the parabolic cylinder function and the confluent hypergeometric function, see~\cite{GR}.

This gives an important concluson that a single optimal measurement decreases the difference between the transition probability and unity twice. In particular, for large $\gamma$, the LZ transition probability exponentially tends to zero, while just one optimal measurement increases it to $1/2$.

\section{Dynamic programming algorithm}\label{SecLZdynprog}

Here we solve the optimal control problem using the dynamic programming paradigm~\cite{Taha}. Denote by $f_n(p,t)$ the maximal value of $\braket{0|\rho(+\infty)|0}$ provided that the time $t$ state of the system is
\begin{equation}\label{EqRho}
\rho(t)=p\ket0\bra0+(1-p)\ket1\bra1
\end{equation}
and $n$ measurements are allowed to perform after the time instant $t$. If $n=0$, then no more measurements are allowed and we have only unitary evolution from $t$ to $+\infty$ which gives
\[
f_0(p,t)=p\,|u_{00}(+\infty,t)|^2+(1-p)|u_{01}(+\infty,t)|^2.
\]
For $n\geq 1$, we have the following recurrent relation:
\begin{equation}\label{EqRecur}
f_n(p,t)=\max_{t'\geq t}f_{n-1}(p\,|u_{00}(t',t)|^2+(1-p)|u_{01}(t',t)|^2,t').
\end{equation}
The target function is $\braket{0|\rho(+\infty)|0}=f_N(0,-\infty)$. Thus, the problem of multidimensional maximization is reduced to a sequence of problems of one-dimensional maximization.

For further simplification, note that $f_n(p,t)=f_n(p',t)$ if $p-1/2$ and $p'-1/2$ have the same sign for any $n\geq0$, so that only whether $p<1/2$ or $p>1/2$ matters. Indeed,  the Bloch vector corresponding to a density operator of form (\ref{EqRho}) has the length $|2p-1|$ and is directed upwards if $p>1/2$ and downwards if $p<1/2$. The target function is $\braket{0|\rho(+\infty)|0}$, so, the Bloch vector corresponding to the final state $\rho(+\infty)$ should be as close as possible to the Bloch vector $(0,0,1)$, which corresponds to the state $\ket0\bra0$. The evolution of a Bloch vector under the action of both unitary evolution and non-selective measurements is uniform: if a Bloch vector $\bm w(t)$ evolves into $\bm w(t')$, then the Bloch vector $\lambda \bm w(t)$ evolves into $\lambda \bm w(t')$, $0\leq\lambda\leq|\bm w|^{-1}$. Hence, the optimal control depends only on the direction of the initial Bloch vector, but not on its length.

However, the optimal value of the target function depends also on the length of the initial Bloch vector. If $p=1/2$, then the quantum state is $\rho=I/2$ (sometimes it is called the chaotic state) and, by induction, one can show that $f_n(1/2,t)=1/2$ for all $n$ and $t$. As $p$ increases (decreases) from $1/2$ to $1$ ($0$), the function $f_n(p,t)$ changes linearly from $1/2$ to $f_n(1,t)$ [$f_n(0,t)$]. So,
\[
f_n(p,t)=\begin{cases}
(1-2p)[f_n(0,t)-1/2]+1/2,&p\leq1/2,\\
(2p-1)[f_n(1,t)-1/2]+1/2,&p\geq1/2,
\end{cases}
\]
and it is sufficient to solve the maximization problem in (\ref{EqRecur})  only for $p=0$ and $p=1$. For a fast and precise solution to this problem we use the so called very fast simulated annealing algorithm.

Now we can solve our maximization problem as follows. Successively, we calculate functions $f_n(p,t)$, $p=0,1$, $n=1,2,\ldots,N$ for $t\in[-50,50]$ with the step 0.01. This is sufficient for $\gamma\leq5$ and $N\leq15$, i.e., all optimal instants of measurements certainly lie within this interval. The results of the calculations are presented in Fig.~\ref{FigResults} and \ref{FigGraphn}.

\begin{figure*}
\includegraphics[width=\linewidth]{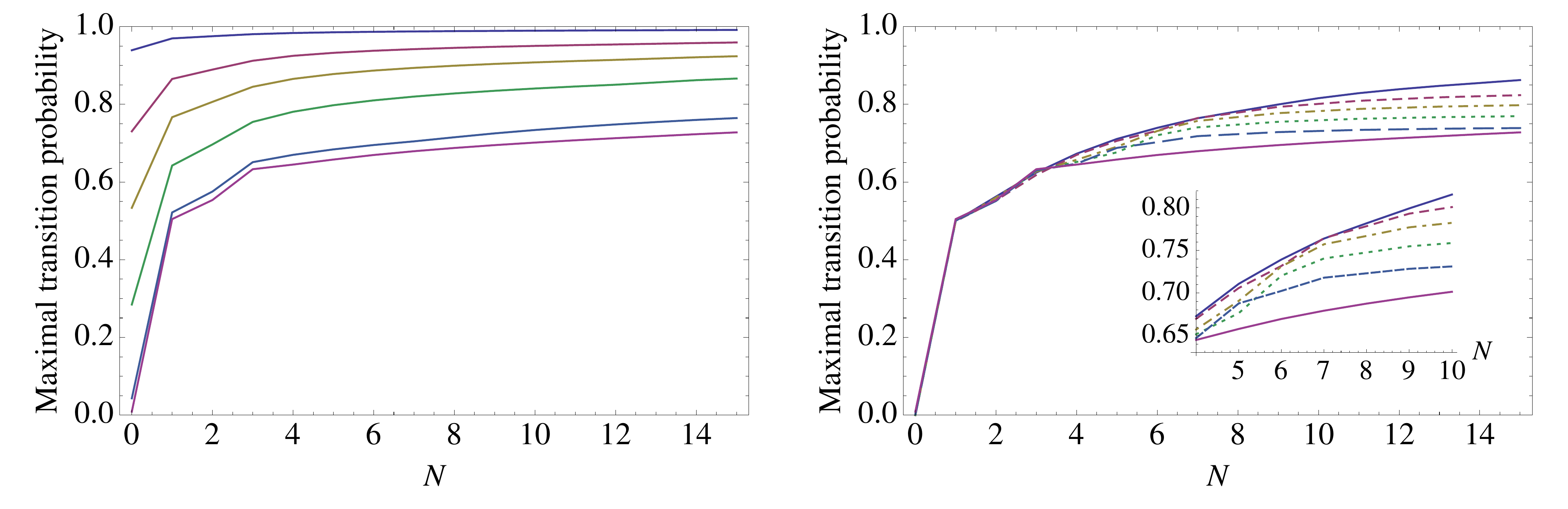}
\caption{Maximal transition probability vs number of measurements $N$ for different $\gamma$. Left (from top to bottom): $\gamma=0.01, \: 0.05,\: 0.1,\: 0.2,\: 0.5,\: 0.75$. Right (from bottom to top): $\gamma=0.75,\: 1,\: 1.2,\: 1.5,\: 2,\: 5$. Inset: a part of the same graph. Results for $\gamma\leq0.5$ were obtained using the algorithm for small $\gamma$, see Sec.~\ref{SecSmall}, and verified by a more precise and general, but more computationally costly method of Sec.~\ref{SecLZdynprog}. Differences in the maximal transition probabilities found by these two algorithms are less than $0.01$. Results for $\gamma>0.5$ were obtained using the general method of Sec.~\ref{SecLZdynprog}, but the differential evolution algorithm combined with the analysis of Sec.~\ref{SecOptLarge} also gives good results. The difference in the maximal transition probabilities is less than $0.01$ for $\gamma\geq1$ and less than or of the order of $0.01$ for $0.5\leq\gamma<1$. The curve for $\gamma=5$, as well as for larger $\gamma$ not shown here, almost coincides with the theoretical upper bound curve given by (\ref{EqTargetLargeOpt}). \label{FigResults}}
\end{figure*}

\begin{figure}
\includegraphics[width=\linewidth]{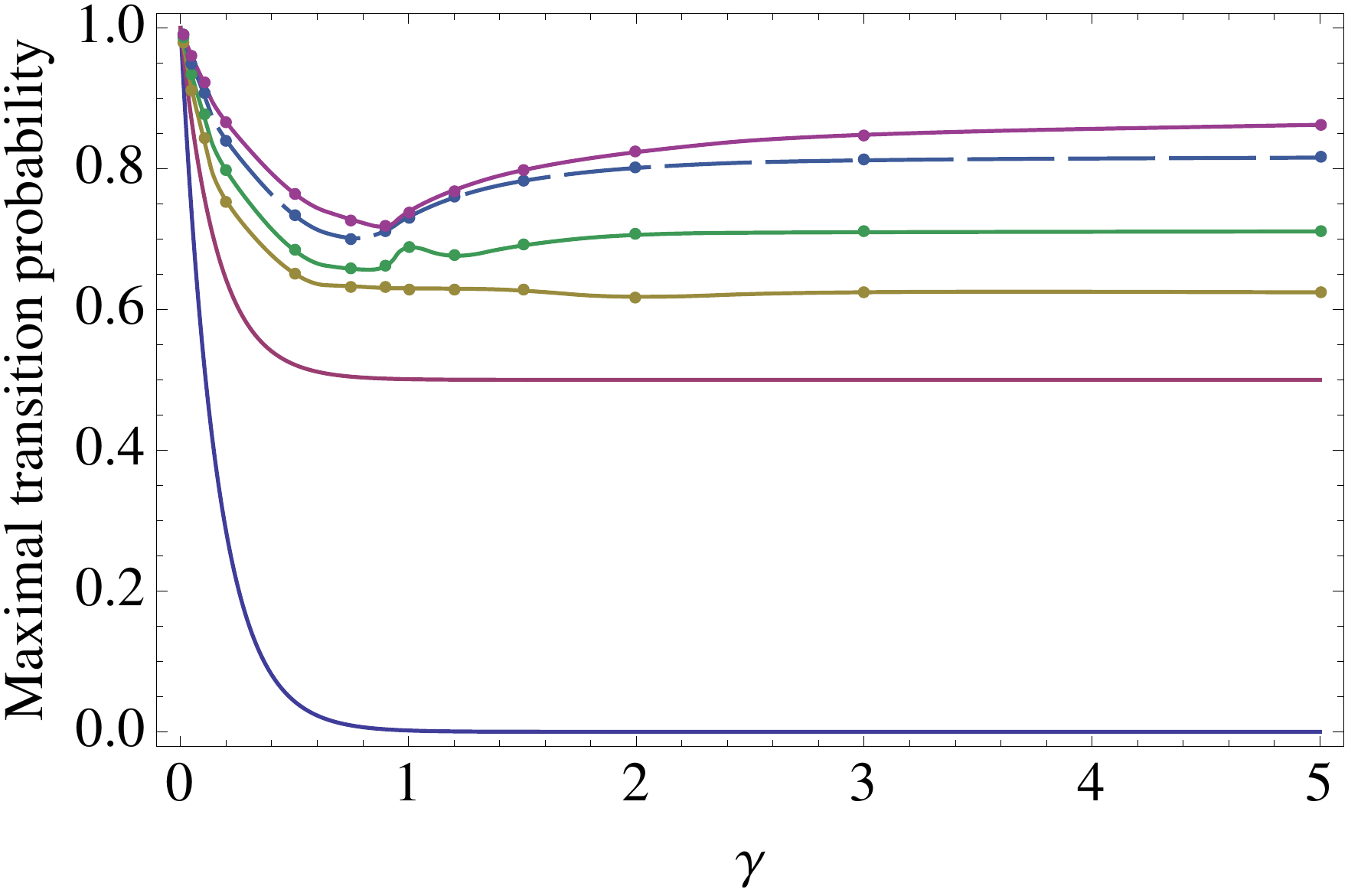}
\caption{Maximal transition probability vs $\gamma$ for a fixed number of measurements; from bottom to top $N=0, \:1,\: 3,\: 5,\: 10,\: 15$. Curves for $N=0$ and $N=1$ are drawn using the exact LZ formula~(\ref{EqLZformula}) and formula~(\ref{EqTarget1}). Other calculations are the same as described in the caption of Fig.~\ref{FigResults}.
\label{FigGraphn}}
\end{figure}

This algorithm is still computationally costly. In the next two sections we develop simpler algorithms and recover mathematical structures standing behind this problem. Toward this aim, we analyze the limiting cases of small and large $\gamma$ corresponding to antiadiabatic and adiabatic regimes.

\section{Anti-adiabatic regime}\label{SecSmall}

The anti-adiabatic regime is the case of small $\gamma$. It turns out (see the derivation in Appendix~\ref{App1}) that, in the first-order approximation with respect to $\gamma$, we have

\begin{widetext}
\begin{eqnarray}
|u_{00}(t,-\infty)|^2&=&1-\frac{\pi\gamma}2- \pi\gamma\sqrt2\,\Re \{e^{i\pi/4}\mathcal F^*(t/\sqrt2)\}-\gamma t^2\,\Re\, {}_2F_2(-it^2/2)+O(\gamma^2),\nonumber\\
|u_{00}(t,t_0)|^2&=&1+\Re\lbrace2\pi\gamma\mathcal F(t_0/\sqrt2)[\mathcal F^*(t/\sqrt2)-\mathcal F^*(t_0/\sqrt2)]-\gamma t^2\,{}_2F_2(-it^2/2)+\gamma t_0^2\,{}_2F_2(-it_0^2/2)\rbrace+O(\gamma^2),\nonumber\\
|u_{00}(+\infty,t_0)|^2&=&1-\frac{\pi\gamma}2+\Re\Bigl\{2\pi\gamma\mathcal F(t_0/\sqrt2)\Bigl[\frac{e^{-i\pi/4}}{\sqrt2}-\mathcal F^*(t_0/\sqrt2)\Bigr]+\gamma t_0^2\,{}_2F_2(-it_0^2/2)\Bigr\}+O(\gamma^2),\nonumber\\
\langle0|\rho(+\infty)|0\rangle&=&1-2\pi\gamma\sum_{k=0}^{N}\Re\{\mathcal F(t_k/\sqrt2)[\mathcal F^*(t_{k}/\sqrt2)-\mathcal F^*(t_{k+1}/\sqrt2)]\}+O(\gamma^2),\label{EqTargetsmall}
\end{eqnarray}
\end{widetext}
where $t_0=-\infty$, $t_{N+1}=+\infty$, and $\Re$ stands for the real part.

As we can see, we should keep only the terms with $j_1=j_2=\ldots=j_N=0$ in sum~(\ref{EqTarget}). Indeed, as we can see from~(\ref{Equutilde}) and (\ref{Equsmall}), all of the other terms have the order $\gamma^2$ or higher.

The value of the Fresnel integral $\mathcal F(t/\sqrt2)$ can be represented graphically by the Kornu spiral. Thus, from the geometric point of view, the problem in the first-order approximation is to find points $t_1,\ldots,t_N$ on the Kornu spiral which maximize function~(\ref{EqTargetsmall}) (see Fig.~\ref{FigKornu}).

Consider the case of a single measurement. Then the transition probability has the form
\[
\langle0|\rho(+\infty)|0\rangle=
1-\pi\gamma-2\pi\gamma|\mathcal F(t_1/\sqrt2)|^2+O(\gamma^2).
\]
Its maximum is achieved at $t_1=0$ and its value in the first-order approximation is $1-\pi\gamma$. Since the first-order approximation to the LZ formula gives the value $1-2\pi\gamma$, this finding coincides with a conclusion in the end of Sec.~\ref{SecLZControl} that a single measurement reduces the difference between one and the transition probability twice.

Taking the limit $N\to\infty$ in~(\ref{EqTargetsmall}) gives the value one for the transition probability, which corresponds to the quantum Zeno effect.

We cannot solve maximization problem (\ref{EqTargetsmall}) analytically for any $N$. This function also has many local maxima for direct numerical solution. Now we develop a numerical method for solving maximization problem (\ref{EqTargetsmall}) using the dynamic programming.

Let us define the functions (not to be confused with the functions $f_n$ in Sec.~\ref{SecLZdynprog})
\begin{eqnarray*}
&&f_0(t)=\Re\left\lbrace\mathcal F(t/\sqrt2)\left[\mathcal F^*(t/\sqrt2)-\frac{e^{-i\pi/4}}{\sqrt2}\right]\right\rbrace,\\
&&f_n(t)\!=\!\min_{t'\geq t}\left[\Re\left\lbrace\mathcal F(t/{\sqrt2})\left[\mathcal F^*({t}/{\sqrt2})-\mathcal F^*({t'}/{\sqrt2})\right]\right\rbrace \right.\\&&\qquad\,+ f_{n-1}(t')\Bigr],\\
&&f_N=\min_{t'}\left[\frac12\Re\left\lbrace e^{i\pi/4}\sqrt2\mathcal F^*(t'/\sqrt2)+1\right\rbrace+ f_{N-1}(t')\right],
\end{eqnarray*}
where $n=1,\ldots,N-1$. In the dynamic programming, one needs to consequently solve the minimization problems for $f_n(t)$, $n=1,\ldots,N$. The approximate value of the transition probability is then
\begin{equation}\label{EqDynProgTarg}
\langle0|\rho(+\infty)|0\rangle=1-2\pi\gamma f_N+O(\gamma^2).
\end{equation}
To solve the problem for arbitrary $N$, we successively calculate $f_n$, $n=1,\ldots,N$, for $t\in[-10,10]$ with the step $0.01$. This interval is large enough for $N\leq15$ since optimal time instants certainly lie within this interval.

To improve the accuracy, we perform the following procedure. Denote the solution found by the dynamic programming algorithm by $t_1,\ldots,t_N$. Then use the point $(t_1,\ldots,t_N)$ in the $N$-dimensional real search space as the initial point to search for the nearest local maximum of~(\ref{EqTargetsmall}). This is much easier than the search for the global maximum. Denote the new point as $(t'_1,\ldots,t'_N)$.

These operations do not depend on $\gamma$. According to~(\ref{EqTargetsmall}), approximation of the target function depends on $\gamma$, but the optimal time instants in the first-order approximation do not. So, it is enough to perform these operations only once. The results are provided in Table~\ref{Tbl1}. Figures~\ref{Figf0} and~\ref{FigKornu} are drawn using the data of Table~\ref{Tbl1}.

\begin{table*}
\caption{Solution of the problem in the first-order approximation with respect to $\gamma$\label{Tbl1}}

\begin{tabular}{c|c|l}
$N$ & $f_0$ & Optimal time instants\\\hline

1 & 0 & $t_1=0$\\\hline

2 & $0.188$ & $t_1=-3.31, t_2=0.12$\\\hline

3 & $0.360$ & $t_1=-3.33, t_2=0, t_3=3.33$\\\hline

4 &  $0.461$ & $t_1=-3.38, t_2=-0.24, t_3=0.24, t_4=3.38$\\\hline

5 & $0.521$ & $t_1=-3.41, t_2=-0.39, t_3=0, t_4=0.39, t_5=3.41$ \\\hline

6 & $0.563$ & $t_1=-3.44, t_2=-0.50, t_3=-0.17, t_4=0.17, t_5=0.50$,  $t_6=3.44$\\\hline

7 & $0.594$ & $t_1=-3.46, t_2=-0.59, t_3=-0.30, t_4=0, t_5=0.30$,  $t_6=0.59, t_7=3.46$\\\hline

8 & $0.619$ & $t_1=-3.48, t_2=-0.66, t_3=-0.39, t_4=-0.13, t_5=0.13$, $t_6=0.39, t_7=0.66$, $t_8=3.48$\\\hline

9 & $0.640$ & $t_1=-3.50, t_2=-0.73, t_3=-0.48, t_4=-0.24, t_5=0$, $t_6=0.24, t_7=0.48$, $t_8=0.73, t_9=3.50$\\\hline

10 & $0.657$ & $t_1=-3.52, t_2=-0.78, t_3=-0.56, t_4=-0.34, t_5=-0.11$, $t_6=0.11$, $t_7=0.34,$ $t_8=0.56, t_9=0.78, t_{10}=3.52$\\\hline

\multirow{2}{*}{11} & \multirow{2}{*}{$0.672$} & $t_1=-3.54, t_2=-0.83, t_3=-0.62, t_4=-0.42, t_5=-0.21$, $t_6=0$, $t_7=0.21$,  $t_8=0.42, t_9=0.62, t_{10}=0.83$, \\ & & $t_{11}=3.54$\\\hline

\multirow{2}{*}{12} & \multirow{2}{*}{$0.685$} & $t_1=-3.55, t_2=-0.88, t_3=-0.68, t_4=-0.49, t_5=-0.29$,
$t_6=0.10$, $t_7=0.10$,  $t_8=0.29, t_9=0.49, t_{10}=0.68$, \\ & & $t_{11}=0.88, t_{12}=3.55$\\\hline

\multirow{2}{*}{13} & \multirow{2}{*}{$0.697$} & $t_1=-3.55, t_2=-0.88, t_3=-0.69, t_4=-0.49, t_5=-0.30$,
$t_6=-0.10$, $t_7=0.10$,  $t_8=0.28, t_9=0.48, t_{10}=0.67$, \\ & & $t_{11}=0.88, t_{12}=3.54, t_{13}=7.08$\\\hline

\multirow{2}{*}{14} & \multirow{2}{*}{$0.709$} & $t_1=-7.08, t_2=-3.55, t_3=-0.87, t_4=-0.68, t_5=-0.48$, $t_6=-0.29$,
$t_7=-0.10$, $t_8=0.10, t_9=0.29, t_{10}=0.48$, \\ & & $t_{11}=0.68, t_{12}=0.87, t_{13}=3.55, t_{14}=7.08$\\\hline

\multirow{2}{*}{15} & \multirow{2}{*}{$0.720$} & $t_1=-7.09, t_2=-3.56, t_3=-0.91, t_4=-0.73, t_5=-0.55$, $t_6=-0.36$,
$t_7=-0.18$,  $t_8=0, t_9=0.18, t_{10}=0.36$, \\ & & $t_{11}=0.55, t_{12}=0.73, t_{13}=0.91, t_{14}=3.56, t_{15}=7.09$\\\hline

\end{tabular}
\end{table*}

\begin{figure}

\includegraphics[width=\linewidth]{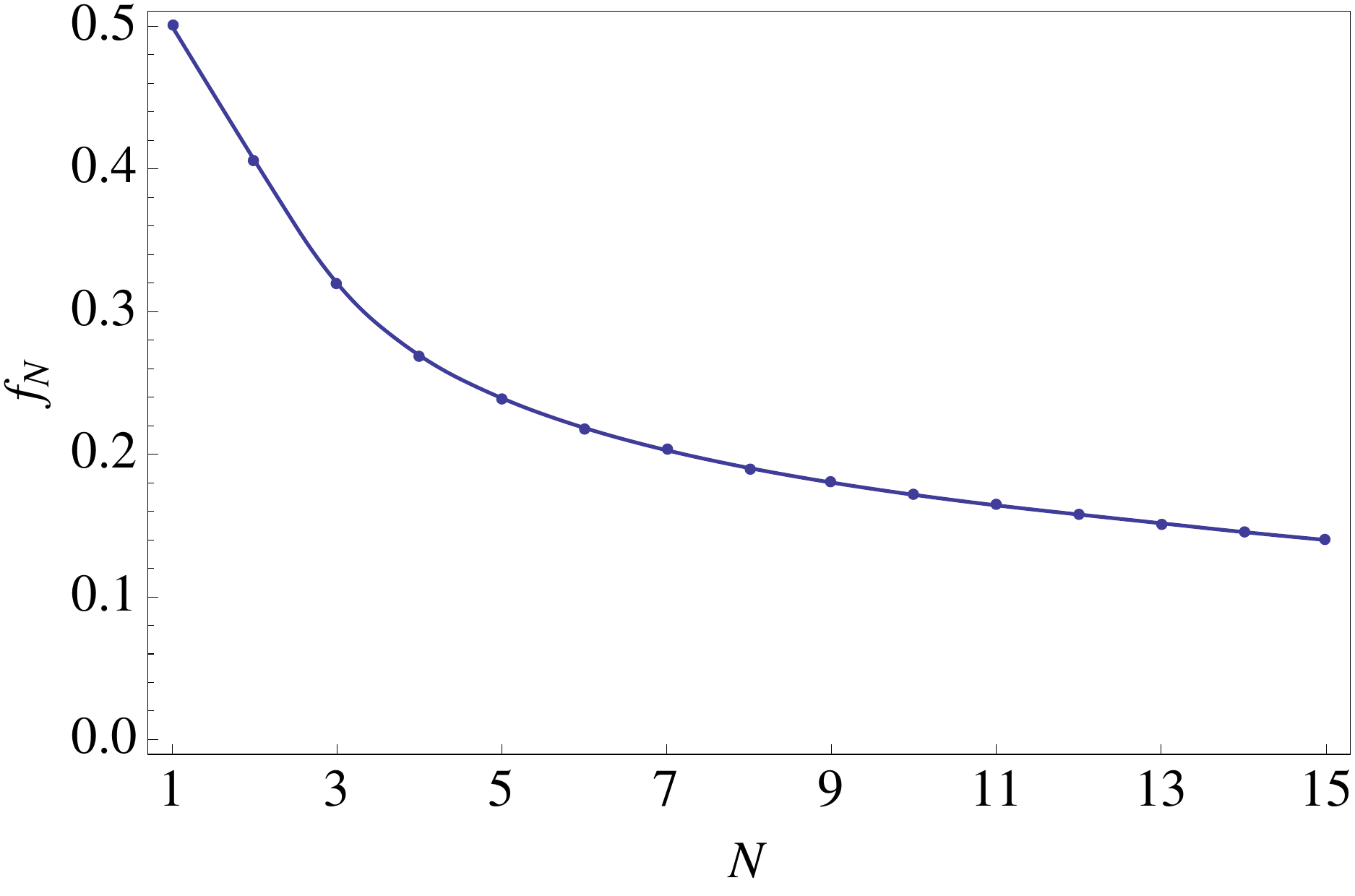}

\caption{$f_N$ vs number of measurements $N$. The target function is $\braket{0|\rho(+\infty)|0}=1-2\pi\gamma f_N+O(\gamma^2)$ [see (\ref{EqDynProgTarg})]. The case of no measurements corresponds to $f_0=1$ and the first-order approximation $1-2\pi\gamma$ of the LZ formula (for small $\gamma$, or anti-adiabatic regime). \label{Figf0}}
\end{figure}
\begin{figure}
\includegraphics[width=.8\linewidth]{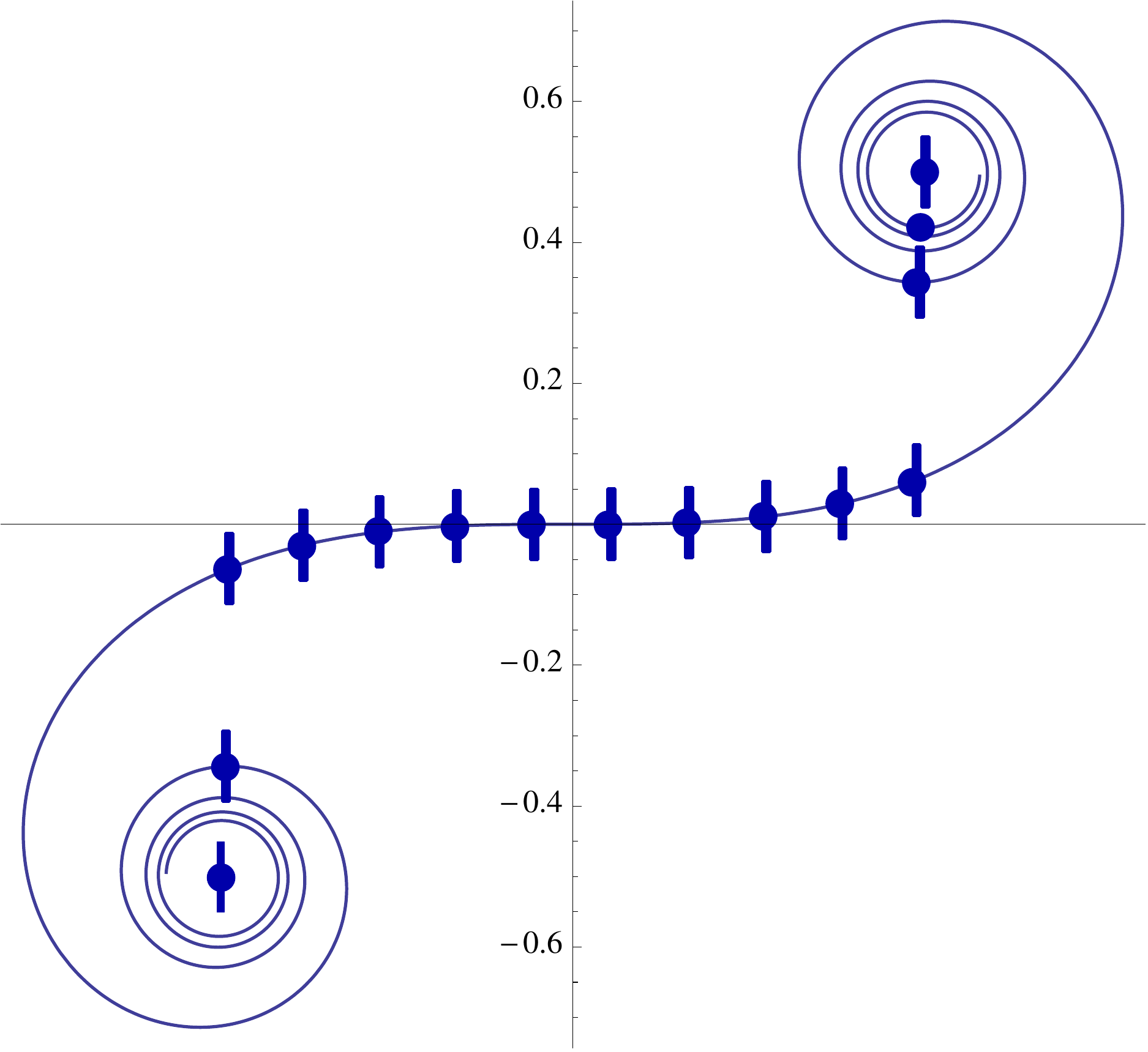}

\caption{Points on the Kornu spiral that correspond to the optimal instants of measurements for $N=12$ (sticks) and $N=13$ (balls), in the first-order approximation for small $\gamma$, or antiadiabatic regime. Points corresponding to $\pm\infty$ are also marked. Note that the optimal instants of measurements are symmetric with respect to zero for $N=12$ (this is the case for the majority of $N$s), but they are nonsymmetric for $N=13$. This ``symmetry breaking'' occurs when a new spire of the Kornu spiral becomes populated (see a point on a new spire in the top right corner of the graph) \label{FigKornu}}
\end{figure}

Our calculations show that these operations give very accurate results for $\gamma\le 0.1$; the value of the target function at the point $(t'_1,\ldots,t'_N)$ differs from the maximal value of the target function found by the more precise algorithm of Sec.~\ref{SecLZdynprog} by less than $0.01$.

For larger $\gamma$ we perform one more procedure. We use the final point $(t'_1,\ldots,t'_N)$ as an initial point to search for the nearest local maximum of the initial maximization problem~(\ref{EqTarget}) without assuming smallness of $\gamma$. This gives very accurate in the above sense results for $\gamma$ up to $0.5$. The results of this algorithm are presented on Fig.~\ref{FigResults}.

An interesting finding is that the obtained optimal instants are symmetric with respect to zero except of the cases $N=2$ and $N=13$. This ``symmetry breaking'' occurs when a new spire for the Kornu spiral becomes populated, see Fig.~\ref{FigKornu}. But due to the symmetry of the problem, if $t_1,\ldots,t_N$ is a solution of the optimization problem, then $-t_N,\ldots,-t_1$ is another solution which gives the same optimal value to the transition probability.

\section{Adiabatic regime}\label{SecOptLarge}

\subsection{Analytic solution}\label{SecLargeAnalyt}

The adiabatic regime corresponds to large $\gamma$. It turns out (see Appendix~\ref{App2} for details) that, in this case, the transition probability can be approximately expressed as
\begin{eqnarray}
&&\braket{0|\rho(+\infty)|0}=\frac12\lbrace 1-\cos\alpha_1\prod_{l=2}^{N}(\cos\alpha_l\cos\alpha_{l-1}\nonumber\\&&\qquad+
\sin\alpha_l\sin\alpha_{l-1}\cos\gamma\Delta\omega_l)\cos\alpha_N\rbrace,
\label{EqTargetLarge}
\end{eqnarray}
where $\alpha_k=\arccot(t_k/2\sqrt{\gamma})$, $k=1,\ldots,N$,
\begin{eqnarray*}
\omega(t)&=&\mathrm{sgn}(t)\ln(|t|+\sqrt{1+t^2})+t\sqrt{1+t^2},\\
\Delta\omega_k&=&\omega(t_k/2\sqrt\gamma)-\omega(t_{k-1}/2\sqrt\gamma).
\end{eqnarray*}

Let us consider the factors
\[
\cos\gamma\Delta\omega_l=\cos[\gamma(\omega_{t_l}-\omega_{t_{l-1}})].
\]
Since $\gamma$ is a large parameter, we can assign to these factors arbitrary values from $[-1,1]$ changing the time $t_l$ by an infinitesimal value of order $2\pi/\gamma\omega'(t_l)$. This can be understood also from Fig.~\ref{FigTarget1dim} (bottom right, $\gamma=5$, $N=3$) which shows rapid oscillations and top and bottom envelope curves; in the limit of infinitely rapid oscillations one can assign to the transition probability any value between the two envelope curves. Assigning the value $+1$ to all of these cosines gives
\[
\braket{0|\rho(+\infty)|0}\!=\!\frac12\left\lbrace 1\!-\!\cos\alpha_1\prod_{l=2}^{N}\cos(\alpha_{l-1}-\alpha_{l})\cos\alpha_N\right\rbrace.
\]
Put $\pi-\alpha_1=\varphi_0$, $\alpha_{l-1}-\alpha_{l}=\varphi_{l-1}$ ($l=2,\ldots,N$), $\alpha_N=\varphi_{N}$. Then
\[
\braket{0|\rho(+\infty)|0}=\frac12\left\lbrace 1+\prod_{l=0}^{N}\cos\varphi_l\right\rbrace,
\]
and the initial maximization problem is reduced to the problem
\[
\prod_{l=0}^{N}\cos\varphi_l\to\max_{\varphi_0,\ldots,\varphi_N}
\]
under the condition
\[
\sum_{l=0}^N\varphi_l=\pi.
\]
It can be shown \cite{Shuang} that the solution is $\varphi_l=\pi/(N+1)$ for all $l$. Hence, $\alpha_l=\pi-\pi l/(N+1)$, $\tau_l=-\cos[\pi l/(N+1)]$, $t_l=-2\sqrt{\gamma}\,\cot(\pi l/(N+1))$, $l=1,\ldots,N$. Thus, the optimal instants of measurements lie in neighborhoods of the instants
\begin{equation}\label{EqTimesOptLarge}
t_k=-2\sqrt\gamma\cot\left(\frac{\pi k}{N+1}\right)
\end{equation}
with radii of the order of $4\pi/\sqrt\gamma\omega'(t_k)$. The maximal value of the transition probability is
\begin{equation}\label{EqTargetLargeOpt}
\braket{0|\rho(+\infty)|0}=\frac12\left\lbrace1+\left(\cos\frac\pi{N+1}\right)^{N+1}\right\rbrace.
\end{equation}
Thus, a particular choice of the values of $\cos\gamma\Delta\omega_l$ gives a feasible solution. But, in view of Theorem~\ref{ThPechen} [with $\Delta\varphi=\pi$ in the right-hand side of (\ref{EqUpBnd}) since, up to a phase, $U^\dag(+\infty,-\infty)\ket0=\ket1$ for large $\gamma$], it is impossible to get a solution that exceeds the right-hand side of~(\ref{EqTargetLargeOpt}). Hence, solution (\ref{EqTimesOptLarge}) is optimal.

\begin{remark}
This optimal solution is not unique for $N\geq2$. In the case $N=2$,
\begin{eqnarray*}
&&\braket{0|\rho(+\infty)|0}\\&&=\frac12\left\lbrace1-\tau_1[\tau_1\tau_2+
\sqrt{(1-\tau_1^2)(1-\tau_2^2)}\cos\gamma\Delta\omega_2]\tau_2\right\rbrace
\end{eqnarray*}
($\tau_k=\cos\alpha_k$). If we, as above, put $\cos\gamma\Delta\omega_2=+1$, then the solution is
$\tau_1=-\cos(\pi/3)=-1/2$, $\tau_2=-\cos(2\pi/3)=1/2$.
However, if we put $\cos\gamma\Delta\omega_2=-1$, but change the sign of, for example, $\tau_1$, then, obviously, the value of the target function will be the same. So, in this case, $\tau_1=\tau_2=1/2$. However, these are different time instants with an infinitesimal distance from each other, such that $\cos\gamma\Delta\omega_2=-1$.
\end{remark}

So, the analytic solution gives only neighborhoods of the optimal time instants. Precise values can be determined numerically by finding the local maximum that is next to the point $(t_1,\ldots,t_N)$ given by  (\ref{EqTimesOptLarge}). Obviously, this problem is much simpler than a search for a global maximum in the initial problem.

However, the difficulty is that the size of the interval $4\pi/\sqrt\gamma\omega'(t_k)$ indefinitely increases as $t_k$ approaches zero, which happens for large $N$. If this size is comparable to or even larger than the differences $|t_k-t_{k\pm1}|$, then this approximation does not work. In this case, we propose the following approaches.

The first approach is to find optimal instants that are close to zero by another numerical global optimization method, e.g., random search or simulated annealing methods. This problem is low-dimensional and can be solved efficiently. For larger times, the approximation of large $\gamma$ can still be used. We developed such a generalization based on the results of Sec.~\ref{SecLargeAnalyt}.

The second approach is to use the differential evolution algorithm (a kind of genetic algorithm or, more generally, evolutionary algorithm) to find the maximum in the not infinitesimal vicinity of the point $(t_1,\ldots,t_N)$ given by  (\ref{EqTimesOptLarge}). Our calculation shows that this gives good results such that the differences between the maximal transition probabilities found by this method and by the more precise method of Sec~\ref{SecLZdynprog} are less than $0.01$ for $\gamma\geq1$ and less than or of the order of $0.01$ for $0.5\leq\gamma<1$. The use of the differential evolution method is much less computationally costly than the algorithm of Sec.~\ref{SecLZdynprog}. Recall that if $\gamma\leq0.5$, then the simple algorithm developed in Sec.~\ref{SecSmall} for the case of small $\gamma$ works well. Hence, in principle, for the whole range of $\gamma$, we can use rather fast algorithms of Sec.~\ref{SecSmall} and of this section, instead of the computationally costly algorithm of Sec.~\ref{SecLZdynprog}.

Let us stress that for the use of the differential evolution algorithm, it is important to restrict the search region to the vicinity of $(t_1,\ldots,t_N)$ given by~(\ref{EqTimesOptLarge}). If we search in the whole space $\mathbb R^N$, then the differential evolution as well as other algorithms of global optimization with the same computational power and time give large error: they find transition probabilities that are less than optimal by $0.1$ or more. So, the use of the analytic results of this section is important.

\subsection{Relation to measurement-based optimal control problem with variable obserables}\label{SecPechen}

In~\cite{PechenIlin,Shuang} the following measurement-based control problem for a two-level system is considered. Let there be no unitary evolution in the system; the initial state of the system be pure with the corresponding Bloch vector $\bm w_0$. We want to perform  $N$ non-selective measurements of arbitrary observables to maximize the probability of transition to another pure state with Bloch vector $\bm w_{\text{target}}$. The angle between $\bm w_0$ and $\bm w_{\text{target}}$ is denoted by $\Delta\varphi$. Let us describe the solution of this problem. Denote $\bm w_1,\ldots,\bm w_N$ as the vectors obtained by rotation of $\bm w_0$ by the angles $k\Delta\varphi/(N+1)$ in the plane formed by $\bm w_0$ and $\bm w_{\text{target}}$. Then the optimal observables are the projectors onto the states with the Bloch vectors $\bm w_1,\ldots,\bm w_N$. Denote them as $P_{\bm w_1},\ldots,P_{\bm w_N}$. The optimal value of the target function is
\[
\frac12\left\lbrace1+\left(\cos\frac{\Delta\varphi}{N+1}\right)^{N+1}\right\rbrace.
\]

This solution can be generalized to the case with unitary dynamics, which was also considered in~\cite{PechenIlin}. Let the system evolve according to the family of unitary operators $U(t,t_0)$. We want to find $N$ observables measured at fixed time instants $t_0\leq t_1\leq\ldots\leq t_N\leq T$ to maximize the transition probability to a desired state $\rho_{\text{target}}$ at a final time $T$. This problem is equivalent to the above problem without unitary evolution and with target state $U^\dag(T,t_0)\rho_{\text{target}} U(T,t_0)$. If $P_{\bm w_1},\ldots,P_{\bm w_N}$ are the optimal projectors in the evolution-free problem, then the optimal projectors for the problem with evolution are
$$U(t_k,t_0) P_{\bm w_k}U^\dag(t_k,t_0),\quad k=1,\ldots,N.$$

In our case, $t_0=-\infty$, $T=+\infty$, the initial state is $\rho(-\infty)=\ket0\bra0$, the target state is $\rho_{\text{target}}=\ket0\bra0$, and $U^\dag(+\infty,-\infty)\rho_{\text{target}}U(+\infty,-\infty)=\ket1\bra1$. Thus, $\Delta\varphi=\pi$. Further, in our problem we have the fixed observable $P_0=\ket0\bra0$, but we can choose the time instants $t_1,\ldots,t_N$. Hence, we must try to choose instants of measurements such that
\begin{equation}\label{EqIdealCond}
U(t_k,t_0) P_{\bm w_k}U^\dag(t_k,t_0)=P_0,\quad k=1,\ldots,N,
\end{equation}
or, equivalently,
$$U^\dag(t_k,t_0) P_0U(t_k,t_0)=P_{\bm w_k},\quad k=1,\ldots,N,$$
where $\bm w_k$ are defined as above. Of course, this is impossible in general. However, let us consider the trajectory of the Bloch vector corresponding to the state
$U^\dag(t,-\infty)\ket0$ for $-\infty\leq t\leq+\infty$ and large $\gamma$. The vector starts from the north pole as $t=-\infty$ and finishes in the south pole as $t=+\infty$. As we see from (\ref{EqLZlargegamma}) and (\ref{EqInitConstA}), its trajectory on the Bloch sphere is an arc, on which the Bloch vectors corresponding to the optimal projectors lie. This allows the target function to achieve the upper bound (\ref{EqTargetLargeOpt}) in the limit of large $\gamma$.

Thus, in this case, we can speak about a kind of duality of the problem with fixed instants of measurements and variable observables and the  problem with fixed observables and variable instants of measurements.

Fig.~\ref{FigGraphn} shows the dependence of the maximal transition probability on $\gamma$ for various $N$. We see that for a large number of measurements, the dependence is nonmonotonic (also we see it in Fig~\ref{FigResults}). The initial decrease of the transition probability with increasing $\gamma$ is in correspondence with the LZ formula. But, for $\gamma$ greater than some value (approximately from $0.7$ to $0.9$ depending on $N$), the probability increases up to the limit~(\ref{EqTargetLargeOpt}). This is caused by rapid oscillations, which allow one to choose the time instants such that the ``ideal'' conditions~(\ref{EqIdealCond}) are approximately satisfied with high precision.

\subsection{Maximin problem}\label{SecMaximin}

As shown above, the transition probability oscillates with high frequency for large $\gamma$. This is due to the factors $\cos(\gamma\Delta\omega_l)$ in~(\ref{EqTargetLarge}). A precise choice of time instants in neighborhoods with radii of orders $4\pi/\sqrt\gamma\omega'(t_k)$ allows one to assign definite values to these factors. However, these results can be used in practice only if the operation time of the measurement device is smaller than  $4\pi/\sqrt\gamma\omega'(t_k)$. Otherwise, if the actual instants of measurement can not be controlled with the necessary accuracy, then the final transition probability can be far from optimal due to oscillations. This can be seen in Fig.~\ref{FigTarget1dim} (bottom right) or Fig.~\ref{FigTarget2dim} (bottom right).

So, it makes it important to consider a maximin problem, that is, the problem of maximization of the transition probability with the worst instants of measurements within the specified neighborhoods. In other words, for given $\tau_1,\ldots,\tau_N$, the device ``chooses'' values of $\cos(\gamma\Delta\omega_l)$ that minimize the transition probability. The problem can be formulated as follows:
\begin{widetext}
\[
\braket{0|\rho(+\infty)|0}=\max_{\alpha_1\geq\ldots\geq\alpha_N}\min_{\Delta\omega_2,\ldots,\Delta\omega_N}\frac12\left\lbrace 1-\cos\alpha_1\prod_{l=2}^{N}(\cos\alpha_l\cos\alpha_{l-1}+
\sin\alpha_l\sin\alpha_{l-1}\cos\gamma\Delta\omega_l)\cos\alpha_N\right\rbrace.
\]
\end{widetext}

Let us analyze this problem. Since the terms $\cos(\gamma\Delta\omega_l)$ enter the target function linearly, the minimizing values are $\pm1$. Hence, the problem can be rewritten as
\begin{eqnarray}
&&\braket{0|\rho(+\infty)|0}=\label{EqMaximin}\\&&\max_{\alpha_1\geq\ldots\geq\alpha_N}\min_{\text{signs}}\frac12\left\lbrace 1-\cos\alpha_1\prod_{l=2}^{N}\cos(\alpha_{l-1}\pm\alpha_{l})\cos\alpha_N\right\rbrace\,.
\nonumber
\end{eqnarray}
Solving this problem is equivalent to solving
\begin{equation}\label{EqMinimax}
\min_{\alpha_1\geq\ldots\geq\alpha_N}\max_{\text{signs}}\cos\alpha_1
\prod_{l=2}^{N}\cos(\alpha_{l-1}\pm\alpha_{l})\cos\alpha_N.
\end{equation}
\begin{theorem}
All solutions of maximin problem (\ref{EqMaximin}) are as follows:\\
(1) $\alpha_1=\pi/2$, all other $\alpha_l$ are arbitrary not exceeding $\alpha_1$;\\
(2) $\alpha_N=\pi/2$, all other $\alpha_l$ are arbitrary not less than $\alpha_N$;\\
(3) every $\alpha_l$ is either 0, $\pi/2$, or $\pi$, and $\alpha_l=\pi/2$ at least for one $l$ and the order is satisfied: $\alpha_1\geq\alpha_2\geq\ldots\geq\alpha_N$.

In all cases, the transition probability is $\braket{0|\rho(+\infty)|0}=1/2$.
\end{theorem}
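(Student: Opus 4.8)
The plan is to work throughout with the equivalent minimax form~(\ref{EqMinimax}). Writing
\[
P=\cos\alpha_1\prod_{l=2}^{N}\cos(\alpha_{l-1}\pm\alpha_l)\cos\alpha_N ,
\]
the maximin value in~(\ref{EqMaximin}) equals $\frac12\{1-V\}$ with $V=\min_{\alpha_1\ge\cdots\ge\alpha_N}\max_{\text{signs}}P$, so I must show $V=0$ and identify exactly the minimizing angles. I would split this into three parts: a universal lower bound $\max_{\text{signs}}P\ge0$; verification that each listed family attains $\max_{\text{signs}}P=0$; and the converse that no other configuration does.

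For the lower bound I would not try to exhibit a single favourable sign pattern (the individual cosines can all be negative), but instead average over the whole sign ensemble. Since the $N-1$ signs act on distinct factors, the sum of $P$ over all $2^{N-1}$ choices factorizes, and each factor collapses through $\cos(\alpha_{l-1}+\alpha_l)+\cos(\alpha_{l-1}-\alpha_l)=2\cos\alpha_{l-1}\cos\alpha_l$. Counting the multiplicity with which each $\cos\alpha_k$ then appears (each exactly twice) gives
\[
\sum_{\text{signs}}P=2^{N-1}\left(\prod_{k=1}^{N}\cos\alpha_k\right)^{2}\ge 0 ,
\]
so $\max_{\text{signs}}P\ge 2^{-(N-1)}\sum_{\text{signs}}P\ge0$ for every ordered configuration. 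Hence $\braket{0|\rho(+\infty)|0}\le\frac12$ for all arguments and $V\ge0$. For attainment I would check that each family forces $P\equiv0$ identically in the signs: in (1) and (2) a boundary factor $\cos\alpha_1$ or $\cos\alpha_N$ vanishes, while in (3), by the ordering the values split into a block of $\pi$'s, then $\pi/2$'s, then $0$'s, so either the first angle is $\pi/2$ (killing $\cos\alpha_1$) or a $\pi$ abuts a $\pi/2$, making an adjacent factor $\cos(\pi\pm\pi/2)=0$; likewise a $\pi/2$ abutting a $0$ gives $\cos(\pi/2\pm0)=0$. Thus $\max_{\text{signs}}P=0$ on all three families, so $V=0$ and the common value is exactly $1/2$.

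The converse is where the real work lies, and the averaging identity already does most of it. If $\max_{\text{signs}}P=0$, then every pattern has $P\le0$ while their sum is $\ge0$, forcing \emph{every} one of the $2^{N-1}$ values of $P$ to vanish (and incidentally $\prod_k\cos\alpha_k=0$, i.e.\ some $\alpha_l=\pi/2$). Because the signs are independent, $P$ vanishes for all patterns iff $\cos\alpha_1=0$, or $\cos\alpha_N=0$, or some adjacent factor is annihilated for \emph{both} signs, i.e.\ $\cos(\alpha_{l-1}-\alpha_l)=\cos(\alpha_{l-1}+\alpha_l)=0$; on $[0,\pi]$ this forces $(\alpha_{l-1},\alpha_l)\in\{(\tfrac\pi2,0),(\pi,\tfrac\pi2)\}$ (difference $\pi/2$, sum $\pi/2$ or $3\pi/2$). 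I would then feed these local alternatives, together with the monotonicity $\alpha_1\ge\cdots\ge\alpha_N$, back in and sort the admissible configurations.

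I expect this last bookkeeping to be the main obstacle. The doubly-vanishing-factor alternative pins only one adjacent pair to boundary values $0,\pi$ and leaves the remaining angles essentially free (for instance $(\pi,\tfrac\pi2,\theta)$ with any $\theta\le\tfrac\pi2$, or $(\theta,\tfrac\pi2,0)$ with any $\theta\ge\tfrac\pi2$, also give $P\equiv0$), so the vanishing set is a priori larger than families (1)--(3). Matching it exactly to the stated list appears to require reading the boundary values $0$ and $\pi$ as the limits $t_k\to\pm\infty$ of $\alpha_k=\arccot(t_k/2\sqrt\gamma)$ --- that is, as degenerate measurements --- and collapsing any configuration containing such a trivial measurement onto one of the listed ones. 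Making this reduction airtight, and confirming that no genuinely new minimizer survives once degenerate measurements are discarded, is the delicate step on which the completeness claim rests.
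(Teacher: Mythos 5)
Your argument is correct up to and including your full characterization of the optimal set, and it takes a genuinely different route from the paper's. The paper has no analogue of your averaging identity $\sum_{\text{signs}}P=2^{N-1}\bigl(\prod_{k}\cos\alpha_k\bigr)^{2}$: it handles the bound and the converse simultaneously by exhibiting, for every configuration outside families (1)--(3), an explicit sign pattern making the product in (\ref{EqMinimax}) strictly positive (all signs ``$-$'' when $\cos\alpha_1\cos\alpha_N>0$; all ``$-$'' except a single ``$+$'' at the index where the sequence crosses $\pi/2$ when $\alpha_1>\pi/2>\alpha_N$). Your identity is cleaner: it yields the upper bound $1/2$ in one stroke and reduces the equality case to the purely local condition that $\cos\alpha_1=0$, or $\cos\alpha_N=0$, or some adjacent pair $(\alpha_{l-1},\alpha_l)$ equals $(\pi/2,0)$ or $(\pi,\pi/2)$.

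The ``delicate step'' you flag at the end is not a defect of your proof; it is a defect of the theorem, and you should not try to argue it away. Your extra configurations are genuine solutions lying in none of the three families: for $N=3$ the point $(\alpha_1,\alpha_2,\alpha_3)=(\pi,\pi/2,\pi/4)$ respects the ordering and has $\cos(\alpha_1+\alpha_2)=\cos(\alpha_1-\alpha_2)=0$, hence $P\equiv0$ for every sign pattern and transition probability exactly $1/2$; yet $\alpha_1\neq\pi/2$, $\alpha_N\neq\pi/2$, and $\pi/4\notin\{0,\pi/2,\pi\}$. Nor can such points be excluded by banning the endpoint values $0,\pi$ (measurements at $t=\mp\infty$): family (3) of the theorem itself consists of exactly such values, so the closed interval $[0,\pi]$ is the intended domain. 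The paper's own proof breaks at precisely this spot: in its second case it asserts $\prod_{l\le k}\cos(\alpha_{l-1}-\alpha_l)>0$, which fails when an adjacent pair equals $(\pi,\pi/2)$, and for the configuration above no sign pattern whatsoever makes $P$ positive. (A smaller slip in the same case: for configurations with every $\alpha_l\in\{0,\pi\}$ and no $\pi/2$, the paper's claim that some $\alpha_l$ lies in $(0,\pi/2)\cup(\pi/2,\pi)$ is also false, though there the conclusion survives because any sign choice gives $P=1$.) The value $1/2$ and the sufficiency of the three listed families are unaffected; what fails is completeness. The correct statement is exactly your characterization --- $\alpha_1=\pi/2$, or $\alpha_N=\pi/2$, or some adjacent pair equals $(\pi,\pi/2)$ or $(\pi/2,0)$, i.e., families (1) or (2) padded on the left by angles $\pi$ and/or on the right by angles $0$ --- and your proof of it is already complete.
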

\begin{proof}
Obviously, the transition probability for all three cases (1)--(3) is equal to 1/2. Let us prove that these and only these arguments give the optimal solution. It is sufficient to show that for other values of $\alpha_l$, we can choose the signs in (\ref{EqMinimax}) such that the right-hand side of (\ref{EqMinimax})  is positive.

Let both $\alpha_1$ and $\alpha_N$ belong to either $[0,\pi/2)$ or $(\pi/2,\pi]$. Then, $\cos\alpha_1\cos\alpha_N>0$. Let us assign all signs to ``$-$''. All differences $\alpha_{l-1}-\alpha_{l}$ belong to the interval $(-\pi/2,\pi/2)$ and therefore $\cos(\alpha_{l-1}-\alpha_{l})>0$. Then the right-hand side of (\ref{EqMinimax}) is positive.

Now let $\alpha_1\in(\pi/2,\pi]$ and $\alpha_N\in[0,\pi/2)$. Then, $\cos\alpha_1\cos\alpha_N<0$.  If the solution is not of form 3), then there exist $\alpha_l\in(0,\pi/2)\cup(\pi/2,\pi)$. For definiteness, suppose that there exists at least one $\alpha_l\in(0,\pi/2)$.

We have $\alpha_1\geq\alpha_2\ldots\geq\alpha_N$. Let $k$ denote a number such that  $\alpha_k\in[\pi/2,\pi]$, $\alpha_{k+1}\in(0,\pi/2)$. Let us assign the signs for $l\leq k$ and $l\geq k+2$ to ``$-$'' and the sign for $l=k+1$ to ``$+$''. Then, $$\prod_{l=1}^k\cos(\alpha_{l-1}-\alpha_{l})>0,
\quad\prod_{l=k+2}^N\cos(\alpha_{l-1}-\alpha_{l})>0\,.$$
But $\alpha_{k}+\alpha_{k+1}\in(\pi/2,3\pi/2)$, so $\cos(\alpha_{k}+\alpha_{k+1})<0$. Combining these gives
\begin{eqnarray*}
&&\cos\alpha_1\prod_{l=1}^k\cos(\alpha_{l-1}-\alpha_{l})\cos(\alpha_{k}+\alpha_{k+1})\\
&&\qquad\times
\prod_{l=k+2}^N\cos(\alpha_{l-1}-\alpha_{l})\cos\alpha_N>0.
\end{eqnarray*}
\end{proof}

Thus, if the operation time of a measurement device is comparable or exceeds the period of oscillations $4\pi/\sqrt\gamma\omega'(t_k)$, then the optimal solution is a single measurement at the instant  $t=0$ which corresponds to $\alpha=\pi/2$; repeated measurements at this instant as well as measurements at  $t=\pm\infty$, which correspond to $\alpha=0$ and $\alpha=\pi$, do not affect the transition probability. In this case, the transition probability has the form
\[
\braket{0|\rho(+\infty)|0}=\frac{1-\tau_1^2}2,
\]
It has no oscillating terms which otherwise would require a precise choice of instants of measurements.

\section{Conclusions and discussion}

This work analyzes the measurement-assisted acceleration of transitions in the LZ system. Control by nonselective measurements is considered as an alternative tool to other methods of controlling the LZ system. The main numerical method exploited in this work is dynamic programming which seems to be natural for measurement-based optimal control problems.

We have obtained analytic behavior for small and large values of $\gamma$ and developed various numerical algorithms for solving the problem of measurement-assisted acceleration of the LZ transition. The combination of these methods allows one to solve the problem for the whole range of $\gamma$. The results are presented in Figs.~\ref{FigResults} and~\ref{FigGraphn}. The range $[0,5]$ for $\gamma$ on these figures covers values of $\gamma$ in experimental works, e.g., in~\cite{Qdriving,SupercondLZ,TunableLZ-BEC}. The increase of $\gamma$ above the value $\gamma=5$ almost does not change the results due to achievement of the upper bound~(\ref{EqTargetLargeOpt}). Hence the results are complete for practical applications.

We discover a surprising effect of non-monotonic dependence of the maximal transition probability on $\gamma$ for a large number of measurements. Decrease of the transition probability for small $\gamma$ is in agreement with the LZ formula, but for large $\gamma$, oscillation effects become essential and can be exploited to increase  the transition probability.

The obtained transition probability in the limit of an infinite number of measurements tends to one that recovers the quantum Zeno effect.  Surprisingly, as can be seen from Fig.~\ref{FigResults}, the convergence to this limit can be very slow for the LZ system, especially for values of $\gamma$ in the intermediate range 0.2--2.0.

The optimization techniques of this work can be applied to the measurement-assisted control of the dynamics of physical
quantum systems if the description by Hamiltonian~(\ref{EqLZHam}) is a suitable approximation. The main experimental challenge may be in performing non-selective measurements fast enough so that the duration of a measurement is much less than the intervals between the optimal time instants. The inability to perform measurements with good enough time precision can produce lower values of the observed transition probability. The impossibility of arbitrarily fast measurement was also argued to be a bottleneck for the quantum Zeno effect due to the time-energy uncertainty relation~\cite{Ghirardi1979}. The advantage of the methods of Sections~\ref{SecSmall} and~\ref{SecOptLarge} is in their robustness due to the final local search which can find optimal solutions for slightly distinct models, e.g., if measurement imperfections are small.

The analysis of Section~\ref{SecMaximin} also provides an answer for the adiabatic regime. In this regime, if time instants of measurements are not well defined and it significantly affects the desired transition probability, then the optimal solution will be to perform only one measurement at the avoided crossing time instant. This measurement will increase the transition probability from almost zero to 1/2, which agrees with the analysis of the quantum Zeno effect in the LZ system based on weak measurements~\cite{LZweak} and the analysis of the LZ dynamics under strong dephasing~\cite{Kayanuma-prl,Ao1989}. Generally, the formalism of continuous or weak measurements can be used if the duration of measurements is comparable to or larger than the intervals between the optimal time instants~\cite{Holevo,Peres,Ivanov,Lidar-prl,Lidar}.

If the description of the physical system requires small corrections to the Hamiltonian~(\ref{EqLZHam}), then the methods of Secs.~\ref{SecSmall} and~\ref{SecOptLarge} still can be applied due to the robustness of
the final local search step. Strong deviations which involve more general and complicated Hamiltonians with, for example, nonlinear dependence on time, coupling to an environment, presence of noise, etc., or even with not exactly known Hamiltonians, are beyond the scope of this work. However, the general approach of Sec.~\ref{SecLZdynprog} based on dynamic programming seems to be natural for measurement-based optimal control due to the discrete nature of control and we expect that it can be applied even to such general situations.

\begin{acknowledgments}
This work was supported by the Russian Science Foundation (Project No. 14-11-00687).
\end{acknowledgments}

\bigskip

\appendix

\section{Approximations of the unitary matrix elements for the anti-adiabatic regime}\label{App1}

The aim of this appendix is to obtain formulas~(\ref{EqTargetsmall}). At first, let us obtain approximations for matrix elements of operators $U(t,t_0)$ in the zeroth order approximation with respect to $\gamma$. Substitution of $\gamma=0$ into~(\ref{EqLZGenSol}) gives constant populations $|a(t)|^2$ and $|b(t)|^2$ [we can use formula $D_0(e^{3i\pi/4}t)=e^{it^2/4}$ or just solve system~(\ref{EqLZ}) with $\gamma=0$]. In particular, under the initial condition $|a(-\infty)|=1$, we have  $|a(+\infty)|=1$. This corresponds to the zeroth-order approximation to the LZ formula.

To obtain the next approximation, we use the following formula for the derivative of the parabolic cylinder function \cite{Brychkov}:
\begin{widetext}
\[
\left.\frac{\partial D_{i\gamma}(e^{3i\pi/4}t)}{\partial\gamma}\right|_{\gamma=0}=
\left\lbrace-\frac{\pi}{\sqrt2}e^{i\pi/4}\mathcal F^*(t/\sqrt2)-\frac{t^2}2\,{}_2F_2\left(-\frac{it^2}2\right)-\frac i2(\ln2+\bm C)\right\rbrace e^{it^2/4}.
\]
\end{widetext}
Here
\[
\mathcal F(t)=\sqrt{\frac2\pi}\int_0^te^{is^2}ds
\]
is the Fresnel integral,
\begin{equation}\label{EqGenHyperGeom}
{}_pF_q\left(\left.\begin{matrix}
a_1,\ldots,a_p\\b_1,\ldots,b_q
\end{matrix}\right|z\right)=\sum_{n=0}^\infty
\frac{(a_1)_n\cdots(a_p)_n}{(b_1)_n\cdots(b_q)_n}\frac{z^n}{n!}
\end{equation}
is the generalized hypergeometric function and $(a)_n=a(a+1)\cdots(a+n-1)$ is a Pochhammer symbol [$(a)_0=1$]. Often we will omit the parameters of the generalized hypergeometric function.

Since
\[
\sqrt\gamma D_{-i\gamma-1}(-e^{i\pi/4}t)=\sqrt\gamma D_{-1}(-e^{i\pi/4}t)+O(\gamma^{3/2}),
\]
we do not need the derivative of $D_{-i\gamma-1}(-e^{i\pi/4}t)$ if we are interested only in the first-order approximation. We have \cite{GR}
\[
D_{-1}(-e^{i\pi/4}t)=\sqrt{\frac\pi2}\,e^{it^2/4}\left[1+\sqrt2e^{i\pi/4}\mathcal F^*(t/\sqrt2)\right].
\]

Then the matrix elements $u_{j0}(t,t_0)$, $j=0,1$, can be written as
\begin{subequations}\label{Equutilde}
\begin{eqnarray}
u_{j0}(t,t_0)&=&\tilde u_{j0}(t,t_0)e^{i(t-t_0)^2/4}\left(\frac{1+t^2}{1+t_0^2}\right)^{i\gamma/2},\\
\tilde u_{00}(t,t_0)&=&1+\pi\gamma \mathcal F(t_0/\sqrt2)[\mathcal F^*(t/\sqrt2)-\mathcal F^*(t_0/\sqrt2)]\nonumber\\
&&-\frac{\gamma t^2}2\,{}_2F_2(-\frac{it^2}2)+\frac{\gamma t_0^2}2\,{}_2F_2(-\frac{it_0^2}2)\nonumber\\
&&-\frac{i\gamma}2\ln\frac{1+t^2}{1+t_0^2}+O(\gamma^2),\\
\tilde u_{10}(t,t_0)&=&-i\sqrt{\pi\gamma}[\mathcal F(t/\sqrt2)-\mathcal F(t_0/\sqrt2)]+O(\gamma^{\frac32})\qquad
\end{eqnarray}
\end{subequations}
Since equations (\ref{EqLZ}) transform one into another by complex conjugation and substitution $t\to-t$, we have
\[
u_{j1}(t,t_0)=u_{1-j,0}^*(-t,-t_0).
\]
So, we have calculated the first-order approximations of matrix elements of unitary evolution with respect to $\gamma$.

The separation of the term $[(1+t^2)/(1+t_0^2)]^{i\gamma/2}$ is motivated by unboundedness of the function ${}_2F_2(1,1;3/2,2;-it^2/2)$ as $t\to\pm\infty$, which we will see in a moment. Now we can explain it as follows. The term  $|t|^{i\gamma}$ in asymptotics~(\ref{EqDasymp}) is equal to one by the absolute value, but its first approximation with respect to  $\gamma$ is $1+i\gamma\ln|t|$. This is an unbounded function, which approximates $|t|^{i\gamma}$ incorrectly for large $t$. This approximation is adequate only for bounded $|t|$. We need to consider $t\to\pm\infty$ as well, and the LZ formula is formulated for this case, hence we must separate this term. Otherwise, as we will see, the first-order approximation of the unitary matrix elements will be invalid for large $|t|$ and we will not be able to reproduce the first-order approximation even for the case of no measurements (the LZ formula).

Since the function $|t|^{i\gamma}$ is undefined for $t=0$, we separate the term $(1+t^2)^{i\gamma/2}$ which is equivalent to $|t|^{i\gamma}$ when $t\to\pm\infty$.

Now let us perform the limits $t_0\to-\infty$ and $t\to+\infty$. We use the following asymptotic expansion for the generalized hypergeometric function \cite{Kim}:
\[
{}_2F_2\left(\left.\begin{matrix}
1,a\\b,c
\end{matrix}\right|z\right)\simeq\frac{\Gamma(b)\Gamma(c)}{\Gamma(a)}
[K_{22}(z)+L_{22}(-z)]
\]
as $|z|\to\infty$, $-3\pi/2<\arg z<\pi/2$. Here
$$K_{22}(z)=z^ve^z{}_2F_0(b-a,c-a|z^{-1}),\quad v=1+a-b-c.$$
Series (\ref{EqGenHyperGeom}) diverges whenever $p\geq q+2$ and $z\neq0$, but here function ${}_2F_0$ is understood as just an asymptotic series. Symbol $\simeq$ denotes that the right-hand side is an asymptotic series of the left-hand side function. Further, $L_{22}(-z)$ is defined by formula (3) on p.~15 of~\cite{Luke-Bessel} (the formula given in~\cite{Kim} is valid only for non-integer positive $a$). An algorithm for obtaining this formula is given in~\cite{Luke-book}. We do not write the whole asymptotic series here because of its cumbersomeness. For our purposes, we need only its first term:
\[L_{22}(-z)=\frac{1}{\sqrt\pi z}[\ln z-\psi(\frac12)]+\ldots=\frac{1}{\sqrt\pi z}[\ln 4z+\bm C]+\ldots\]
(the next term has the order $1/z^2$), where $\psi(x)$ is the the digamma function, i.e., logarithmic derivative of the gamma function, $\psi(1/2)=-\bm C-2\ln2$, and $\bm C\approx0.577$ is the Euler (or Euler--Mascheroni) constant. Thus,
\[{}_2F_2\left(\left.\begin{matrix}
1,1\\\frac32,2
\end{matrix}\right|-\frac{it^2}2\right)=
\frac{\ln(2it^2)+\bm C}{it^2}+O(t^{-3}).\]

The function $(t^2/2){}_2F_2(-it^2/2)$ logarithmically diverges when $|t|$ indefinitely increases. But the divergent term is compensated by the term $-\frac i2\ln(1+t^2)/(1+t_0^2)$ in the expressions for $u_{00}(t,t_0)$ and $u_{11}(t,t_0)$. This was the motivation to separate the term $[(1+t^2)/(1+t_0^2)]^{i\gamma/2}$.

Also we use
\[
\lim_{t\to\pm\infty}\mathcal F(t)=\pm\frac{1+i}2=\pm\frac{e^{i\pi/4}}{\sqrt2}.
\]

So,
\begin{widetext}
\begin{subequations}\label{Equsmall}
\begin{eqnarray}
\tilde u_{00}(t,-\infty)=1&-&\frac{\pi\gamma}2\left[1+ \sqrt2e^{i\pi/4}\mathcal F^*(t/\sqrt2)\right]-\frac{\gamma t^2}2\,{}_2F_2(-\frac{it^2}2)-\frac{i\gamma}2\{\ln[2(1+t_0^2)]+\bm C\}+\frac{\pi\gamma}4+O(\gamma^2),\\
\tilde u_{00}(+\infty,t_0)=1&+&\pi\gamma\mathcal F(t_0/\sqrt2)[\frac{e^{-i\pi/4}}{\sqrt2}-\mathcal F^*(t_0/\sqrt2)]+\frac{\gamma t_0^2}2\,{}_2F_2(-\frac{it_0^2}2)+\frac{i\gamma}2\{\ln[2(1+t_0^2)]+\bm C\}-\frac{\pi\gamma}4+O(\gamma^2),\qquad\:\\
\tilde u_{10}(t,-\infty)&=&e^{-i\pi/4}\sqrt{\frac{\pi\gamma}2}[1+\sqrt2e^{-i\pi/4}\mathcal F(t/\sqrt2)]+O(\gamma^{3/2}),\\
\tilde u_{10}(+\infty,t_0)&=&e^{-i\pi/4}\sqrt{\frac{\pi\gamma}2}[1-\sqrt2e^{-i\pi/4}\mathcal F(t_0/\sqrt2)]+O(\gamma^{3/2}),\\
\tilde u_{00}(+\infty,-\infty)&=&1-\pi\gamma+O(\gamma^2),\quad \tilde u_{10}(+\infty,-\infty)=e^{-i\pi/4}\sqrt{2\pi\gamma}+O(\gamma^{3/2}).
\end{eqnarray}
\end{subequations}
\end{widetext}
Form this, we can immediately derive (\ref{EqTargetsmall}). Also, we see that
$$|u_{00}(+\infty,-\infty)|^2=1-2\pi\gamma+O(\gamma^2),$$
so that the first-order approximation of the LZ formula with respect to $\gamma$ is reproduced correctly.

\begin{remark}
Since we need only absolute values of matrix elements of the unitary evolution operator, our trick with  separation of the term $[(1+t^2)/(1+t_0^2)]^{i\gamma/2}$ is not necessary. Indeed, in the first-order approximation, the quantities $|u_{jj}(t,t_0)|^2$ include only the real part of $t^2\,{}_2F_2$, which does not contain divergent terms. However, our trick is essential for problems where not only the transition probability but also the phase is important  (for example, for measurement-assisted Landau-Zener-St\"ukelberg interferometry) or for use of higher-order approximations.
\end{remark}

\section{Approximation of the transition probability for the adiabatic regime}\label{App2}

The aim of this appendix is the derivation of (\ref{EqTargetLarge}).
In the case of large $\gamma$, we use the following asymptotic formula for the parabolic cylinder functions \cite{Crothers,Olver}:
\begin{widetext}
\begin{subequations}\label{EqCroth}
\begin{eqnarray}
D_{i\gamma}(2e^{3i\pi/4}\sqrt\gamma\,t)&=&\exp\left\lbrace
\frac{\pi\gamma}4-\frac{i\gamma}2(1-\ln\gamma)-i\gamma\omega(t)\right\rbrace\sqrt{\frac12\left(1-\frac{|t|}{\sqrt{1+t^2}}\right)}(1+O(\gamma^{-1})),\label{EqCroth1}\\
\sqrt\gamma D_{-1-i\gamma}(-2e^{i\pi/4}\sqrt\gamma\,t)&=&\exp\left\lbrace
\frac{\pi\gamma}4+\frac{i\gamma}2(1-\ln\gamma)-\frac{i\pi}4+
i\gamma\omega(t)\right\rbrace\sqrt{\frac12\left(1+\frac{|t|}{\sqrt{1+t^2}}\right)}(1+O(\gamma^{-1})),
\end{eqnarray}
\end{subequations}
\end{widetext}
where
\[
\omega(t)=\mathrm{sgn}(t)\ln(|t|+\sqrt{1+t^2})+t\sqrt{1+t^2}.
\]
Formally, these asymptotics are valid only for $t\neq0$. However, note that the limits $t\to\pm0$ of the above asymptotic formulas for $D_{i\gamma}(\pm 2e^{3i\pi/4}\sqrt\gamma\,t)$ coincide with each other and with the large $\gamma$ asymptotics of $D_{i\gamma}(0)$. Indeed, in view of (\ref{EqD0}),
\[D_{i\gamma}(0)\sim\frac1{\sqrt2}\exp\left\lbrace\frac{\pi\gamma}4-
\frac{i\gamma}2(1-\ln\gamma)\right\rbrace\]
(here $a\sim b$ means $\lim\limits_{\gamma\to\infty}a/b=1$), which coincides with~(\ref{EqCroth1}) for $t=0$. Similarly, the limits $t\to\pm0$ of the above asymptotic formula for $\sqrt\gamma D_{-1-i\gamma}(\pm2e^{i\pi/4}\sqrt\gamma\,t)$ coincide with each other and with the large $\gamma$ asymptotics of $D_{-1-i\gamma}(0)$:
\[\sqrt\gamma D_{-1-i\gamma}(0)\sim\frac1{\sqrt2}\exp\left\lbrace\frac{\pi\gamma}4+
\frac{i\gamma}2(1-\ln\gamma)-\frac{i\pi}4\right\rbrace.\]
So, formulas (\ref{EqCroth}) can be used for all non-negative $t$.

The substitution of these asymptotics in~(\ref{EqLZGenSol}) gives
\begin{subequations}\label{EqLZlargegamma}
\begin{eqnarray}
a(2\sqrt\gamma\,t)&=&\tilde c_1e^{i\gamma\omega(t)}\sqrt{\frac12\left(1+\frac t{\sqrt{1+t^2}}\right)}+O(\gamma^{-1})\nonumber\\&&+\tilde c_2e^{-i\gamma\omega(t)}\sqrt{\frac12\left(1-\frac t{\sqrt{1+t^2}}\right)},\\
b(2\sqrt\gamma\,t)&=&\tilde c_1e^{i\gamma\omega(t)}\sqrt{\frac12\left(1-\frac t{\sqrt{1+t^2}}\right)}+O(\gamma^{-1})\nonumber\\&&-\tilde c_2e^{-i\gamma\omega(t)}\sqrt{\frac12\left(1+\frac t{\sqrt{1+t^2}}\right)},
\end{eqnarray}
\end{subequations}
where new constants are
\[
\tilde c_{j}=c_{j}\exp\left\lbrace\frac{\pi\gamma}4\pm\frac {i\gamma}2(1-\ln\gamma)+\frac{3i\pi}8\right\rbrace,
\]
Here and in the following, the top  sign in $\pm$ and $\mp$ corresponds to $j=0$ and the bottom sign corresponds to $j=1$. In the following, for simplicity, we will omit the terms $O(\gamma^{-1})$ and solve the problem in the principal approximation with respect to $\gamma$. For the initial conditions $a(2\sqrt\gamma\,t_0)=1$, $b(2\sqrt\gamma\,t_0)=0$, the coefficients are
\begin{equation}\label{EqInitConstA}
\tilde c_j=e^{\mp i\gamma\omega(t_0)}\sqrt{\frac12\left(1\pm\frac{t_0}{\sqrt{1+t_0^2}}\right)}.
\end{equation}
For the initial conditions   $a(2\sqrt\gamma\,t_0)=0$, $b(2\sqrt\gamma\,t_0)=1$,
\begin{equation}\label{EqInitConstB}
\tilde c_j=\pm e^{\mp i\gamma\omega(t_0)}\sqrt{\frac12\left(1\mp \frac{t_0}{\sqrt{1+t_0^2}}\right)}.
\end{equation}

Let us introduce a new time variable
\[
\tau(t)=\frac t{\sqrt{1+t^2}}\in[-1,1]
\]
($\tau=\pm1$ correspond to $t=\pm\infty$). From (\ref{EqLZlargegamma})--(\ref{EqInitConstB}), we derive
\begin{eqnarray*}
&&|u_{jl}(2\sqrt\gamma\,t,2\sqrt\gamma\,t_0)|^2=\frac12(1\pm\tau\tau_0)\\&&\qquad+
\frac12\sqrt{(1-\tau_0^2)(1-\tau^2)}\cos\gamma\Delta\omega t,
\end{eqnarray*}
where $j,l=0,1$, $\tau=\tau(t)$, $\tau_0=\tau(t_0)$, $\Delta\omega_t=\omega_t-\omega_{t_0}$, and the ``plus'' and ``minus'' signs in $\pm$ correspond to $l=j$ and $l=1-j$, accordingly.

Let $2\sqrt\gamma\,t_0$ be the initial time instant (it is equal to $-\infty$ for our problem, but, for a while, we will consider a more general case with an arbitrary $t_0$), with $N$ measurements performed at the time instants $2\sqrt\gamma\,t_1,\ldots,2\sqrt\gamma\,t_N$. Let us define functions (again, not to be confused with the functions $f_n$ in Secs.~\ref{SecLZdynprog} and~\ref{SecSmall})
\begin{eqnarray*}
&&f_j^0(t_N)=|u_{0j}(+\infty,t_N)|^2,\\
&&f_j^k(t_{N-k},t_{N-k+1},\ldots,t_N)=\\&&\qquad\sum_{j_{N-k+1},\ldots,j_N\in\{0,1\}} \prod_{l=N-k}^{N}|u_{j_{l+1},j_l}(t_{l+1},t_l)|^2,
\end{eqnarray*}
$k=1,2,\ldots,N$, where $j_{N-k}=j$, $j_{N+1}=0$, and $t_{N+1}=+\infty$ [cf. (\ref{EqTarget})]. Obviously,
\begin{eqnarray*}
&&f_j^{k+1}(t_{N-k-1},\ldots,t_N)=\\
&&\quad|u_{0j}(t_{N-k},t_{N-k-1})|^2f_0^k(t_{N-k},\ldots,t_N)\\
&&\quad+|u_{1j}(t_{N-k},t_{N-k-1})|^2f_1^k(t_{N-k},\ldots,t_N).
\end{eqnarray*}

By induction, it is not hard to prove that
\begin{widetext}
\begin{eqnarray*}
&&f_j^k(2\sqrt\gamma\,t_{N-k},\ldots,2\sqrt\gamma\,t_N)=\frac12\left\lbrace1\pm\tau_N\prod_{l=N}^{N-k+1}\left[\tau_l\tau_{l-1}+
\sqrt{(1-\tau_l^2)(1-\tau_{l-1}^2)}\cos\gamma\Delta\omega_l\right]\right\rbrace,\\
&&\braket{0|\rho(+\infty)|0}=\lim_{t_0\to-\infty}f_0^N(2\sqrt\gamma\,t_{0},2\sqrt\gamma\,t_{1},\ldots,2\sqrt\gamma\,t_N)=
\frac12\left\lbrace 1-\tau_1\prod_{l=2}^{N}\left[\tau_l\tau_{l-1}+
\sqrt{(1-\tau_l^2)(1-\tau_{l-1}^2)}\cos\gamma\Delta\omega_l\right]\tau_N\right\rbrace,
\end{eqnarray*}
\end{widetext}
where $\prod_{l=N}^{N-k+1}$ is understood as a product in reverse order, $\tau_l=\tau(t_l)$, $\Delta\omega_l=\omega({t_l})-\omega(t_{l-1})$, and the product $\prod_{l=2}^{N}$ is defined to be equal to one if $N=1$ (so, in this case we have $1-\tau_1\tau_N\equiv 1-\tau_1^2$ in the curved brackets).

Since  $\tau_l\in[-1,1]$, we can write $\tau_l=\cos\alpha_l$, $\alpha_l\in[0,\pi]$. This substitution gives exactly (\ref{EqTargetLarge}).


\begin{thebibliography}{78}%
\bibitem{WisemanBook2014}H. W. Wiseman and G. J. Milburn, \textit{Quantum Measurement and Control} (Cambridge University Press, Cambridge, 2014).

\bibitem{Lapert2010} M. Lapert, Y. Zhang, M. Braun, S. J. Glaser, and D.
Sugny, Singular extremals for the time-optimal control of
dissipative spin $\frac12$ particles, Phys. Rev. Lett. \textbf{104}, 083001 (2010).

\bibitem{Brif2012}C. Brif, R. Chakrabarti, and H. Rabitz, \textit{Advances in Chemical
Physics} (Wiley, New York, 2012).

\bibitem{Caruso2012}F. Caruso, S. Montangero, T. Calarco, S. F. Huelga, and M. B.
Plenio, Coherent optimal control of photosynthetic molecules, Phys. Rev. A \textbf{85}, 042331 (2012).

\bibitem{Leung2012}P. M. Leung and B. C. Sanders, Coherent control of microwave pulse storage in superconducting circuits, Phys. Rev. Lett. \textbf{109}, 253603 (2012).

\bibitem{Brif2010}C. Brif, R. Chakrabarti, and H. Rabitz, Control of quantum phenomena: Past, present and future, New J. Phys. \textbf{12}, 075008 (2010).

\bibitem{Holevo2013}A. Mari, V. Giovannetti, and A. S. Holevo, Quantum state majorization at the output of bosonic Gaussian channels, Nat. Commun. \textbf{5}, 3826 (2014).

\bibitem{TannorBook2007}D. J. Tannor, \textit{Introduction to Quantum Mechanics: A Time Dependent Perspective} (University Science, Sausalito, CA, 2007).

\bibitem{AlessandroBook2007}D. D'Alessandro, \textit{Introduction to Quantum Control and Dynamics} (Chapman and Hall/CRC, London, 2007).

\bibitem{LetokhovBook2007}V. S. Letokhov, \textit{Laser Control of Atoms and Molecules} (Oxford University Press, Oxford, 2007).

\bibitem{BrumerBook2003}P. W. Brumer and M. Shapiro, \textit{Principles of the Quantum Control of Molecular Processes} (Wiley-Interscience, New York, 2003).

\bibitem{FradkovBook2003}\textit{Control of Molecular and Quantum Systems}, edited by A. L. Fradkov and O. A. Yakubovskii (Institute for Computer Studies,
Moscow-Izhevsk, 2003) [in Russian].

\bibitem{RiceBook2000}S. A. Rice and M. Zhao, \textit{Optical Control of Molecular Dynamics} (Wiley, New York, 2000).

\bibitem{ButkovskiyBook1990}A. G. Butkovskiy and Y. I. Samoilenko, \textit{Control of Quantum-
Mechanical Processes and Systems} (Kluwer Academic,
Dordrecht, 1990).

\bibitem{Machnes2011}S. Machnes, U. Sander, S. J. Glaser, P. de Fouquieres, A. Gruslys,
S. Schirmer, and T. Schulte-Herbruggen, Comparing, optimizing, and benchmarking quantum-control algorithms in a unifying programming framework, Phys. Rev. A \textbf{84}, 022305 (2011).

\bibitem{Kuprov2011}P. de Fouquieres, S. Schirmer, S. Glaser, and I. Kuprov, Second order gradient ascent pulse engineering, J. Magn. Res. \textbf{212}, 412 (2011).

\bibitem{Baturina}O. V. Baturina and O. V. Morzhin, Optimal control of the spin system on a basis of the global improvement method, Autom. Remote Control \textbf{72}, 1213 (2011).

\bibitem{Sanders2014}E. Zahedinejad, S. G. Schirmer, and B. C. Sanders, Evolutionary algorithms for hard quantum control, Phys. Rev. A \textbf{90}, 032310 (2014).

\bibitem{qsl}T. Caneva, M. Murphy, T. Calarco, R. Fazio, S. Montangero, V. Giovannetti, and G. E. Santoro, Optimal control at the quantum speed limit, Phys. Rev. Lett. \textbf{103}, 240501 (2009).

\bibitem{Hentschel2011}A. Hentschel and B. C. Sanders, Efficient algorithm for optimizing adaptive quantum metrology processes, Phys. Rev. Lett. \textbf{107}, 233601 (2011).

\bibitem{Landau}L. Landau, On the theory of transfer of energy at collisions II, Phys. Z. Sowjetunion \textbf{2}, 46 (1932).

\bibitem{Zener}C. Zener, Non-adiabatic crossing of energy levels, Proc. R. Soc. London A \textbf{137}, 696 (1932).

\bibitem{Stuck}E. C. G. St\"uckelberg, Theorie der unelastischen St\"osse zwischen Atomen, Helv. Phys. Acta \textbf{5}, 369 (1932).

\bibitem{Majorana}E. Majorana, Atomi orientati in campo magnetico variabile, Nuovo Cim. \textbf{9}, 43 (1932).

\bibitem{Pokrovsky}V. L. Pokrovskii, Landau and modern physics, Phys. Uspekhi \textbf{52}, 1169 (2009).

\bibitem{Kuztetsov}A. M. Kuztetsov, \textit{Charge Transfer in Physics, Chemistry, and Biology} (Gordon and Breach, Reading, 1995).

\bibitem{Warshel}A. Warshel, Role of the chlorophyll dimer in bacterial photosynthesis, Proc. Natl. Acad. Sci. USA \textbf{77}, 3105 (1980).

\bibitem{LZchem}C. Zhu and S. H. Lin, A multidimensional Landau-Zener description of chemical reaction dynamics and vibrational coherence, J. Chem. Phys. \textbf{107}, 2859 (1997).

\bibitem{Nitzan}A. Nitzan, \textit{Chemical Dynamics in Condensed Phases} (Oxford University Press, Oxford, 2006).

\bibitem{MachZenderLZ}W. D. Oliver, Y. Yu, J. C. Lee, K. K. Berggren, L. S. Levitov, and T. P. Orlando, Mach-Zehnder interferometry in a strongly driven superconducting qubit, Science \textbf{310}, 1653 (2005).

\bibitem{SpinBeamspl}J. R. Petta, H. Lu, and A. C. Gossard, A coherent beam splitter for electronic spin states, Science \textbf{327}, 669 (2010).

\bibitem{UltrafastLZS}G. Cao, H.-O. Li, T. Tu, L. Wang, C. Zhou, M. Xiao, G.-C. Guo, H.-W. Jiang, and G.-P. Guo, Ultrafast universal quantum control of a quantum-dot charge qubit using Landau–Zener–St\"uckelberg interference, Nat. Commun. \textbf{4}, 1401 (2013).

\bibitem{LZ-gate-robust}C. Hicke, L. F. Santos, and M. I. Dykman, Fault-tolerant Landau-
Zener quantum gates, Phys. Rev. A \textbf{73}, 012342 (2006).

\bibitem{SupercondLZ}J. Johansson, M. H. S. Amin, A. J. Berkley, P. Bunyk, V. Choi, R. Harris, M. W. Johnson, T. M. Lanting, S. Lloyd, and G. Rose, Landau-Zener transitions in a superconducting flux qubit, Phys. Rev. B \textbf{80}, 012507 (2009).

\bibitem{Qdriving}M. G. Bason, M. Viteau, N. Malossi, P. Huillery, E. Arimondo,
D. Ciampini, R. Fazio, V. Giovannetti, R. Mannella, and O. Morsch, High-fidelity quantum driving, Nat. Phys. \textbf{8}, 147 (2012).

\bibitem{TrapfreeLZ}A. Pechen and N. Il'in, Trap-free manipulation in the Landau- Zener system, Phys. Rev. A \textbf{86}, 052117 (2012).

\bibitem{LZ-spinchain}L. C. Wang, X. L. Huang, and X. X. Yi, Landau-Zener transition of a two-level system driven by spin chains near their critical points, Europhys. J. D \textbf{46}, 345 (2008).

\bibitem{TempLZ}R. S. Whitney, M. Clusel and T. Ziman, Temperature can enhance coherent oscillations at a Landau-Zener transition, Phys. Rev. Lett. \textbf{107}, 210402 (2011).

\bibitem{PhotonassistedLZ}Z. Sun, J. Ma, X. Wang, and F. Nori, Photon-assisted Landau- Zener transition: Role of coherent superposition states, Phys. Rev. A \textbf{86}, 012107 (2012).

\bibitem{LZ-mediated-envir}A. Dodin, S. Garmon, L. Simine, and D. Segal, Landau-Zener transitions mediated by an environment: Population transfer and energy dissipation, J. Chem. Phys. \textbf{140}, 124709 (2014).

\bibitem{PechenIlin}A. Pechen, N. Il'in, F. Shuang, and H. Rabitz, Quantum control by von Neumann measurements, Phys. Rev. A \textbf{74}, 052102 (2006).

\bibitem{PechenChemPh}F. Shuang, A. Pechen, T.-S. Ho, and H. Rabitz, Observation- assisted optimal control of quantum dynamics, J. Chem. Phys. \textbf{126}, 134303 (2007).

\bibitem{Shuang}F. Shuang, M. Zhou, A. Pechen, R. Wu, O. M. Shir, and H. Rabitz, Control of quantum dynamics by optimized measurements, Phys. Rev. A \textbf{78}, 063422 (2008).

\bibitem{Blok2014}M. S. Blok, C. Bonato, M. L. Markham, D. J. Twitchen, V.~V.~Dobrovitski, and R. Hanson, Manipulating a qubit through the backaction of sequential partial measurements and real-time feedback, Nat. Phys. \textbf{10}, 189 (2014).

\bibitem{Wiseman2011} H. W. Wiseman, Quantum control: Squinting at quantum systems, Nature (London) \textbf{470}, 178 (2011).

\bibitem{Fu2014}S. Fu, G. Shi, A. Proutiere, and M. R. James, Feedback policies for measurement-based quantum state manipulation, Phys. Rev. A \textbf{90}, 062328 (2014).

\bibitem{MLearn}A. Hentschel and B. C. Sanders, Machine learning for precise quantum measurement, Phys. Rev. Lett. \textbf{104}, 063603 (2010).

\bibitem{Halfin}L. A. Khalfin, On the theory of decay at quasi-stationary state, Sov. Phys. Dokl. \textbf{2}, 232 (1958).

\bibitem{Halfin-2}L. A. Khalfin, Contribution to the decay theory of a quasi- stationary state, JETP \textbf{6}, 1053 (1958).

\bibitem{SudarMisr}E. C. G. Sudarshan and B. Misra, A criterion for reducibility of a relativistic wave equation, J. Math. Phys. \textbf{18}, 855 (1977).

\bibitem{Cook}R. J. Cook, What are quantum jumps?, Phys. Scr. \textbf{T21}, 49 (1988).

\bibitem{Itano}W. M. Itano, D. J. Heinzen, J. J. Bollinger, and D. J. Wineland, Quantum Zeno effect, Phys. Rev. A \textbf{41}, 2295 (1990).

\bibitem{Zeil-1}P. G. Kwiat, H. Weinfurter, T. Herzog, A. Zeilinger, and M. A. Kasevich, Interaction-free measurement, Phys. Rev. Lett. \textbf{74}, 4763 (1995).

\bibitem{Zeil-2}P. G. Kwiat, A. G. White, J. R. Mitchell, O. Nairz, G. Weihs, H. Weinfurter, and A. Zeilinger, High-efficiency quantum interrogation measurements via the quantum Zeno effect, Phys. Rev. Lett. \textbf{83}, 4725 (1999).

\bibitem{Campisi2011}M. Campisi, P. Talkner, and P. H\"anngi, Influence of measurements on the statistics of work performed on a quantum system,
Phys. Rev. E \textbf{83}, 041114 (2011).

\bibitem{Ashhab2014}S. Ashhab, Landau-Zener transitions in a two-level system
coupled to a finite-temperature harmonic oscillator, Phys. Rev. A \textbf{90}, 062120 (2014).

\bibitem{BraatasRashba}A. Brataas and E. I. Rashba, Nuclear dynamics during Landau-
Zener singlet-triplet transitions in double quantum dots, Phys. Rev. B \textbf{84}, 045301 (2011).

\bibitem{GR}I. S. Gradshteyn and I. M. Ryzhik, \textit{Table of Integrals, Series,
and Products} (Elsevier, New York, 2007).

\bibitem{Taha}H. A. Taha, \textit{Operation Research: An Introduction}, 8th ed.
(Pearson Prentice Hall, Upper Saddle River, NJ, 2007).

\bibitem{TunableLZ-BEC}A. J. Olson, S.-J. Wang, R. J. Niffenegger, C.-H. Li, C. H. Greene, and Y. P. Chen, Tunable Landau-Zener transitions in a spin-orbit-coupled Bose-Einstein condensate, Phys. Rev. A \textbf{90}, 013616 (2014).

\bibitem{Ghirardi1979}O. C. Ghirardi, C. W. Omero, and T. Weber, Small-time
behaviour of quantum nondecay probability and Zeno's paradox
in quantum mechanics, Il Nuovo Cimento \textbf{52}, 421 (1979).

\bibitem{LZweak}A. Novelli, W. Belzig, and A. Nitzan, Landau–Zener evolution under weak measurement: Manifestation of the Zeno effect under diabatic and adiabatic measurement protocols, New J. Phys. \textbf{17}, 013001 (2015).

\bibitem{Kayanuma-prl}Y. Kayanuma, Population inversion in optical adiabatic rapid
passage with phase relaxation, Phys. Rev. Lett. \textbf{58}, 1934 (1987).

\bibitem{Ao1989}P. Ao and J. Rammer, Influence of dissipation on the Landau- Zener transition, Phys. Rev. Lett. \textbf{62}, 3004 (1989).

\bibitem{Holevo}A. S. Holevo, \textit{Statistical Structure of Quantum Theory} (Springer, New York, 2001).

\bibitem{Peres}A. Peres, \textit{Quantum Theory: Concepts and Methods} (Kluwer Academic, Dordrecht, 2002).

\bibitem{Ivanov}M. G. Ivanov, \textit{How to Understand Quantum Mechanics} (R\&C Dynamics, Moscow, 2012) [in Russian].

\bibitem{Lidar-prl}G. A. Paz-Silva, A. T. Rezakhani, J. M. Dominy, and D.~A.~Lidar, Zeno effect for quantum computation and control, Phys. Rev. Lett. \textbf{108}, 080501 (2012).

\bibitem{Lidar}J. M. Dominy, G. A. Paz-Silva, A. T. Rezakhani, and D. A. Lidar, Analysis of the quantum Zeno effect for quantum control and computation, J. Phys. A 46, 075306 (2013).

\bibitem{Brychkov}Y. A. Brychkov, \textit{Handbook of Special Functions: Derivatives, Integrals, Series and Other Formulas} (Chapman and Hall/CRC, London, 2008).

\bibitem{Kim}S. K. Kim, The asymptotic expansion of a hypergeometric function ${}_2F_2(1,\alpha;\rho_1,\rho_2; z)$, Math. Comput. \textbf{26}, 963 (1972).

\bibitem{Luke-Bessel}Y. L. Luke, \textit{Integrals of Bessel Functions} (McGraw Hill,
New York, 1962).

\bibitem{Luke-book}Y. L. Luke, \textit{The Special Functions and Their Approximations}
(Academic, New York, 1969).

\bibitem{Crothers}D. S. F. Crothers, Asymptotic expansions for parabolic cylinder
functions of large order and argument, J. Phys. A \textbf{5}, 1680 (1972).

\bibitem{Olver}F. W. J. Olver, Uniform asymptotic expansions for Weber parabolic cylinder functions of large orders, J. Res. NBS \textbf{63B}, 131 (1959).
\end{thebibliography}
\end{document}